\definecolor{plot0}{HTML}{004488}
\definecolor{plot1}{HTML}{DDAA33}
\definecolor{plot2}{HTML}{BB5566}
\definecolor{plot3}{HTML}{000000}
\definecolor{plot4}{HTML}{AAAAAA}
\definecolor{plot5}{HTML}{00BB44}
\DeclarePairedDelimiter{\abs}{\vert}{\vert}
\DeclarePairedDelimiter{\ceil}{\lceil}{\rceil}
\DeclarePairedDelimiter{\floor}{\lfloor}{\rfloor}
\newcommand*{\inv}[1]{\ensuremath{#1^{-1}}}
\newcommand*{\avg}[1]{\ensuremath{\bar{#1}}}
\newcommand*{\transpose}[1]{\ensuremath{#1^{\mathsf{T}}}}
\newcommand*{\diff}{\ensuremath{\ \mathrm{d}}}
\newcommand*{\imag}{\ensuremath{\mathrm{j}}}
\DeclareMathOperator{\sign}{sign}
\DeclareMathOperator{\cov}{cov}
\newcommand*{\reals}{\ensuremath{\mathds{R}}}
\newcommand*{\complexes}{\ensuremath{\mathds{C}}}
\newcommand*{\expect}[2][]{\ensuremath{\mathbb{E}_{#1}\left[#2\right]}}
\newcommand*{\unif}{\ensuremath{\mathcal{U}}}
\newcommand*{\normaldist}{\ensuremath{\mathcal{N}}}
\newcommand*{\gxsdist}{\ensuremath{\tilde{\chi}}}
\newcommand*{\cdf}{\ensuremath{F}}
\newcommand*{\pdf}{\ensuremath{f}}
\newcommand*{\cdfnormal}{\ensuremath{\Phi}}
\newcommand*{\pdfnormal}{\ensuremath{\phi}}
\newcommand*{\mean}{\ensuremath{\mu}}
\newcommand*{\std}{\ensuremath{\sigma}}
\newcommand*{\var}{\ensuremath{\std^2}}
\NewCommandCopy{\rv}{\bm}
\let\Re\relax
\DeclareMathOperator{\Re}{Re}
\let\Im\relax
\DeclareMathOperator{\Im}{Im}
\NewCommandCopy{\originalmiddle}{\middle}
\def\middle#1{\mathrel{}\originalmiddle#1\mathrel{}}
\pgfplotsset{compat=newest}
\pgfplotsset{%
	betterplot/.style={
		width=.93\linewidth,
		height=.27\textheight,
		xlabel near ticks,
		ylabel near ticks,
		cycle list name=lineplot cycle,
		mark options=solid,
		xmajorgrids=true,
		xminorgrids=true,
		ymajorgrids=true,
		grid style={line width=.1pt, draw=gray!20},
		major grid style={line width=.25pt,draw=gray!30},
		legend cell align=left,
		legend style = {
			/tikz/every even column/.append style={column sep=0.33cm}
		},
	},
}
\newcommand{\todo}[2][]{\ignorespaces\leavevmode
	\if\relax\detokenize{#1}\relax
	{\color{red}[TODO: #2]}%
	\else
	{\color{red}[TODO (#1): #2]}%
	\fi
}
\definecolor{change}{HTML}{0096b8}
\theoremstyle{plain}%
\newtheorem{thm}{Theorem}
\newtheorem{cor}{Corollary}
\newtheorem{lem}{Lemma}
\theoremstyle{definition}
\newtheorem{prob}{Problem}
\newtheorem*{prob*}{Problem Statement}
\theoremstyle{remark}
\newtheorem{rem}{Remark}
\newtheorem*{rem*}{Remark}
\newtheoremstyle{example}{\topsep}{\topsep}{}{}{\itshape}{.}{ }{}
\theoremstyle{example}
\newtheorem{example}{Example}
\newtheorem*{example*}{Example}
	\titlespacing{\section}{0pt}{1.5ex plus 1.5ex minus 0.5ex}{0.7ex plus 1ex minus 0ex} %
	\titlespacing{\subsection}{0pt}{1.5ex plus 1.5ex minus 0.5ex}{0.7ex plus .5ex minus 0ex} %
	\titlespacing{\section}{0pt}{3.0ex plus 1.5ex minus 1.5ex}{0.7ex plus 1ex minus 0ex} %
	\titlespacing{\subsection}{0pt}{3.5ex plus 1.5ex minus 1.5ex}{0.7ex plus .5ex minus 0ex} %
	\def\thesubsubsectiondis{\arabic{subsubsection})}
	\def\theparagraphdis{\alph{paragraph})}
	\titleformat{\subsubsection}[runin]{\normalfont\normalsize\itshape}{\thesubsubsectiondis}{.5em}{}[:]
	\titlespacing*{\subsubsection}{\parindent}{0ex plus 0.1ex minus 0.1ex}{1ex}
	\titleformat{\paragraph}[runin]{\normalfont\normalsize\itshape}{\theparagraphdis}{.5em}{}[:]
	\titlespacing*{\paragraph}{2\parindent}{0ex plus 0.1ex minus 0.1ex}{1ex}
\newcommand*{\numsteps}{\ensuremath{N}}
\newcommand*{\phase}{\ensuremath{\rv{\varphi}}}
\newcommand*{\phaseresult}{\ensuremath{\rv{\theta}}}
\newcommand*{\maxangle}{\ensuremath{a}}
\newcommand*{\radius}{\ensuremath{\rv{R}}}
\newcommand*{\radiusmin}[1][]{
	\if\relax\detokenize{#1}\relax
		\ensuremath{R_{\textnormal{min}}}
	\else
		\ensuremath{R_{\textnormal{min}, #1}}
	\fi
}
\newcommand*{\support}{\mathcal{S}}
\newcommand*{\supportinner}[1][]{%
	\if\relax\detokenize{#1}\relax
		\ensuremath{\support_{\textnormal{in}}}
	\else
		\ensuremath{\support_{\textnormal{in}, #1}}
	\fi
}
\newcommand*{\supportouter}[1][]{%
	\if\relax\detokenize{#1}\relax
		\ensuremath{\support_{\textnormal{out}}}
	\else
		\ensuremath{\support_{\textnormal{out}, #1}}
	\fi
}
\newcommand*{\X}{\ensuremath{\rv{X}}}
\newcommand*{\Y}{\ensuremath{\rv{Y}}}
\newcommand*{\Z}{\ensuremath{\rv{Z}}}
\newcommand*{\ratioXY}{\ensuremath{\rv{W}}}
\newcommand*{\probfa}{\ensuremath{P_{\textnormal{UR}}}}
\title{On the Distribution of a Two-Dimensional Random~Walk with Restricted Angles}
\author{Karl-Ludwig Besser, \IEEEmembership{Member, IEEE}
	\thanks{K.-L. Besser is with the Department of Electrical Engineering, Linköping University, 581\,83 Linköping, Sweden (email: karl-ludwig.besser@liu.se).}
	\thanks{The work of K.-L. Besser is supported by Security Link.}
}
\begin{document}
\maketitle

\begin{abstract}\noindent\boldmath
	In this paper, we derive the distribution of a two-dimensional (complex) random walk in which the angle of each step is restricted to a subset of the circle.
	This setting appears in various domains, such as in over-the-air computation in signal processing.
	In particular, we derive the exact joint and marginal distributions for two steps, numerical solutions for a general number of steps, and approximations for a large number of steps.
	Furthermore, we provide an exact characterization of the support for an arbitrary number of steps.
	The results in this work provide a reference for future work involving such problems.
\end{abstract}
\begin{IEEEkeywords}
	Random walk,
	Directional statistics,
	Probability distribution,
	Phase alignment,
	Over-the-air computation.
\end{IEEEkeywords}
\glsresetall

\section{Introduction and Problem Formulation}\label{sec:introduction}
Many signal processing applications involve the superposition of random phasors, e.g., in wireless communications or optics, making the distribution of the resulting amplitude and phase a basic object of interest.
Mathematically, the sum of $\numsteps$~random phasors corresponds to a random walk in the complex plane with $\numsteps$~steps.
Throughout the following, we consider a two-dimensional random walk with $\numsteps$~unit steps, which can be expressed as a random walk in the complex plane as
\begin{equation}\label{eq:def-random-walk}
	\Z_{\numsteps} = \sum_{i=1}^{\numsteps} \exp\left(\imag\phase_i\right) = \radius_{\numsteps}\exp\left(\imag\phaseresult_{\numsteps}\right)\,.
\end{equation}
In this work, we address the problem of finding the joint and marginal probability distributions of the resulting radius~$\radius_{\numsteps}$ and resulting angle~$\phaseresult_{\numsteps}$ after $\numsteps$~steps.
This problem dates back to Pearson~\cite{Pearson1905} and has been extensively studied in directional statistics~\cite{Mardia1972,Jammalamadaka2001}.
However, traditionally, the problem is posed with the assumption that each step of the random walk is taken in an arbitrary direction, i.e., the angles~$\phase_i$ are uniformly distributed over~$[-\pi, \pi]$.
In contrast to the existing literature, we assume that the phases~$\phase_i$ are independent and uniformly distributed with a maximum angle~${\maxangle\leq\frac{\pi}{2}}$, i.e., ${\phase_i\sim\unif[-\maxangle, \maxangle]}$ with \gls{pdf}
\begin{equation}\label{eq:def-pdf-phases}
	\pdf_{\phase_i}(\theta) = \frac{1}{2\maxangle},\quad -\maxangle\leq \theta \leq \maxangle
\end{equation}
and \gls{cdf}
\begin{equation}\label{eq:def-cdf-phases}
	\cdf_{\phase_i}(\theta) = \frac{\theta+\maxangle}{2\maxangle}\,.
\end{equation}

This general problem has various applications in the area of signal processing and beyond.
It is especially relevant for topics related to phase alignment and the detection of phase misalignment, such as in the context of over-the-air computation~\cite{Dahl2024,Shao2021} and distributed antenna systems~\cite{Larsson2024,Nanzer2021}, e.g., for sensing applications.
As a specific application example, consider an over-the-air computation system~\cite{Sahin2023}, in which $\numsteps$~devices transmit their data simultaneously.
Ideally, the signals of all devices are aligned perfectly and there is no phase offset.
However, due to drift effects, the phase misalignment worsens over time and the devices need to be resynchronized.
As this calibration can be expensive, it should only be done when necessary, i.e., when the phase offset of any of the devices is larger than a tolerated threshold.
Detecting this potential misalignment can be reduced to the problem addressed in this work.
In particular, the devices can transmit pilot symbols periodically, such that the received signal (of the superimposed individual signals) has the same structure as the random walk in~\eqref{eq:def-random-walk}.
Assuming a maximum tolerated phase offset of~$\maxangle$, the results of this work can be used to calculate the likelihood that at least one device exceeds the threshold.
This can in turn be used to decide whether a recalibration is necessary, avoiding unnecessary resynchronizations and making the system more efficient.
This application example is discussed in more detail in \autoref{sec:application-example}.

Besides signal processing applications in wireless communications, similar models can be found in optics.
Optical speckle patterns can be described through the superposition of random phasors, including problems with phases that are not uniformly distributed across the full circle~\cite[Sec.~2.6]{Goodman2020}.
Furthermore,
the considered problem also arises in unrelated fields, such as in the modeling of animal movement in biology~\cite{Codling2008,Marsh1988}.

The problem studied throughout the rest of this paper is summarized as follows.

\begin{prob}\label{prob:problem-formulation}
	What are the joint and marginal distributions of the resulting radius~$\radius_{\numsteps}$ and angle~$\phaseresult_{\numsteps}$ for the random walk with $\numsteps$~steps in~\eqref{eq:def-random-walk}, if the phases~$\phase_i$ of the individual steps are uniformly distributed over the interval~${[-\maxangle, \maxangle]}$ with ${\maxangle\leq\frac{\pi}{2}}$?
\end{prob}

\begin{rem}[{Exact Distribution for ${\numsteps=1}$}]
	For the trivial case of only a single step, i.e., $\Z_1=\exp(\imag\phase_1)$, the resulting distributions are straightforward.
	In particular, the radius~${\radius_1=1}$ is constant and the resulting phase~$\phaseresult_1=\phase_1$ is uniformly distributed according to~\eqref{eq:def-cdf-phases}.
\end{rem}

\subsection{Existing Results and Contributions}
The traditional problem with~${\maxangle=\pi}$ has been extensively studied since Pearson posed the problem.
First, it should be noted that the resulting phase is always uniformly distributed over the full circle, just as in each individual step.
Therefore, the discussion of the traditional problem reduces to the discussion of the distribution of the resulting radius~$\radius_{\numsteps}$.

An approximation for a large number of steps has been noted by Rayleigh~\cite{Rayleigh1905}, where the resulting radius follows a Rayleigh distribution.
An exact solution to the traditional problem has been proposed by Kluyver~\cite{Kluyver1905} who showed that the distribution is given as
\begin{equation*}
	\cdf_{\radius_{\numsteps}}(r) = r \int_{0}^{\infty} J_1(rt) J_0(t)^{\numsteps} \diff{t}\,,
\end{equation*}
where $J_\alpha$ are the Bessel functions of the first kind and order~$\alpha$.
However, while it provides an accurate solution, the integral of a product of Bessel functions can be difficult to compute in general.
Therefore, several numerical solutions have been proposed~\cite{Bennett1948,Simon1985}.
A summary of the results for the traditional problem with a uniform distribution of the phases over the full circle is given in \autoref{tab:summary-literature}.

\begin{table}
	\renewcommand*{\arraystretch}{1.25}
	\centering
	\caption{Summary of existing results for the two-dimensional random walk with phases uniformly distributed over the full circle~(${\maxangle=\pi}$).}
	\label{tab:summary-literature}
	\begin{tabularx}{.97\linewidth}{l|XXX}
		\toprule
		& $\numsteps=2$ & $\numsteps=3$ & $\numsteps\to\infty$\\
		\midrule
		Radius~$\radius_{\numsteps}$ & \cite[Sec.~3.2]{Jammalamadaka2001}\newline(Exact) & \cite{Bennett1948,Simon1985}\newline(Numeric.) & Rayleigh~\cite{Rayleigh1905}\newline(Approx.)\\
		Angle~$\phaseresult_{\numsteps}$ & Uniform & Uniform & Uniform\\
		Joint~$(\radius_{\numsteps}, \phaseresult_{\numsteps})$ & Independent & Independent & Independent \\
		\bottomrule
	\end{tabularx}
\end{table}

While the case of a uniform distribution over the full circle has been studied in detail, there are fewer results for different distributions of the phases~$\phase_i$.
One known result is for the von~Mises distribution~\cite[{Sec.~4.5}]{Mardia1972}, of which the uniform distribution is a special case.
However, the support of the von~Mises distribution is still the full circle.
In contrast, in this work, we derive new results for the distribution of the two-dimensional random walk where the angle of each step is uniformly distributed over a subset of the circle as described above.
Existing work on circular statistics with sector-limited distributions, such as the truncated von~Mises distribution~\cite{FernandezGonzalez2017}, typically focuses on single-angle modeling and inference~\cite{Sakthivel2026,Kanika2016} and does not provide endpoint distributions or support geometry for multi-step random walks with restricted angles, which is the focus of this paper.

The contributions of this work are summarized as follows.
\begin{itemize}
	\item We derive analytical expressions of the joint and marginal distributions of the resulting radius~$\radius_{2}$ and angle~$\phaseresult_{2}$ for the two-dimensional random walk with two unit steps and angles~$\phase_i$ uniformly distributed over the interval~${[-\maxangle, \maxangle]}$. (\autoref{thm:distributions-n2})
	\item For an arbitrary number of steps~$\numsteps$, we characterize the support of~$\Z_{\numsteps}$ (\autoref{thm:support-bound-characterization}), provide a parametrization (\autoref{cor:support-parametrization}) and a condition that allows simplifying its structure (\autoref{lem:condition-support-unique}).
	\item We show how the distributions for a small number of steps~${\numsteps\geq3}$ can be calculated using numerical integration, and we provide an implementation for this. (\autoref{thm:distributions-general-n} and~\cite{BesserGithub})
	\item We derive an approximation of the joint and marginal distributions for a large number of steps~$\numsteps$. (\autoref{thm:distributions-large-n})
\end{itemize}
All results are illustrated with numerical examples and the source code to reproduce all calculations and simulations is made freely available at~\cite{BesserGithub}.
An overview of the results can be found in \autoref{tab:summary-contributions}.
Furthermore, we discuss an example of applying the results to a problem in over-the-air computation in \autoref{sec:application-example}.

\begin{table}
	\renewcommand*{\arraystretch}{1.25}
	\centering
	\caption{Summary of the results derived in this work for the two-dimensional random walk with angles uniformly distributed over the interval~${[-\maxangle, \maxangle]}$, ${\maxangle\leq\frac{\pi}{2}}$.}
	\label{tab:summary-contributions}
	\begin{tabularx}{.97\linewidth}{l|XXX}
		\toprule
		& $\numsteps=2$ & $\numsteps=3$ & $\numsteps\to\infty$\\
		\midrule
		Radius~$\radius_{\numsteps}$ & \eqref{eq:result-cdf-radius-n2}/\eqref{eq:result-pdf-radius-n2} (Exact) & \eqref{eq:result-cdf-radius-general-n} (Numer.) & \eqref{eq:result-cdf-radius-large-n}/\eqref{eq:result-pdf-radius-large-n}\newline(Approx.)\\
		Angle~$\phaseresult_{\numsteps}$ & \eqref{eq:result-cdf-phase-n2}/\eqref{eq:result-pdf-phase-n2} (Exact) & \eqref{eq:result-cdf-phase-general-n-approx}/\eqref{eq:result-pdf-phase-general-n-approx}\newline(Approx.) & \eqref{eq:result-cdf-phase-large-n}/\eqref{eq:result-pdf-phase-large-n}\newline(Approx.)\\
		Joint~$(\radius_{\numsteps}, \phaseresult_{\numsteps})$ & \eqref{eq:result-pdf-joint-n2} (Exact) & \eqref{eq:result-pdf-joint-general-n} (Numer.) & \eqref{eq:result-pdf-joint-large-n} (Approx.)\\
		\bottomrule
	\end{tabularx}
\end{table}

\subsubsection*{Notation}
Random variables are denoted in capital boldface letters, e.g., $\X$, and their realizations in small letters, e.g.,~$x$.
We use $F_{\X}$ and $f_{\X}$ for the probability distribution and its density, respectively.
The expectation is denoted by $\mathbb{E}$ and the probability of an event by $\Pr$.
The uniform distribution on the interval~${[a,b]}$ is denoted as~${\unif[a,b]}$.
The normal distribution with mean~$\mu$ and variance~$\sigma^2$ is denoted as~${\normaldist(\mu, \sigma^2)}$.
The derivative of a function $f$ is denoted by $f'$.
\section{Exact Solution for \texorpdfstring{$\numsteps=2$}{N=2}}\label{sec:two-steps}

For the simplest case of $\numsteps=2$, we derive the following exact solution for the distributions of the resulting radius and angle.

\begin{thm}\label{thm:distributions-n2}
	Consider a two-dimensional random walk with two unit-length steps and uniformly distributed phases in the interval~${[-\maxangle, \maxangle]}$, ${0<\maxangle\leq\frac{\pi}{2}}$.
	The resulting radius~$\radius_2$ and resulting angle~$\phaseresult_2$ are distributed as follows.
	\begin{itemize}
		\item The distribution~$\cdf_{\radius_2}$ of the radius~$\radius_2$ is given as
		\begin{equation}\label{eq:result-cdf-radius-n2}
			\cdf_{\radius_2}(r) = \frac{\left(\maxangle - \arccos\left(\frac{r}{2}\right)\right)^2}{\maxangle^2}\,,
			\quad 2\cos\maxangle \leq r \leq 2\,.
		\end{equation}
		The corresponding density~$\pdf_{\radius_2}$ is given as
		\begin{equation}\label{eq:result-pdf-radius-n2}
			\pdf_{\radius_2}(r) = \frac{2\left(\maxangle - \arccos\left(\frac{r}{2}\right)\right)}{\maxangle^2 \sqrt{4 - r^2}}\,.
		\end{equation}
		\item The distribution~$\cdf_{\phaseresult_2}$ of the angle~$\phaseresult_2$ is given as
		\begin{equation}\label{eq:result-cdf-phase-n2}
			\cdf_{\phaseresult_2}(\theta) = \frac{\maxangle^2 + 2\maxangle \theta - \sign(\theta) \theta^2}{2 \maxangle^2},
			\quad -\maxangle \leq \theta \leq \maxangle\,.
		\end{equation}
		The corresponding density~$\pdf_{\phaseresult_2}$ is given as
		\begin{equation}\label{eq:result-pdf-phase-n2}
			\pdf_{\phaseresult_2}(\theta) = \frac{1}{\maxangle} \left(1-\frac{\abs{\theta}}{\maxangle}\right)\,.
		\end{equation}
		\item The joint density~$\pdf_{\radius_2,\phaseresult_2}$ of the radius~$\radius_2$ and angle~$\phaseresult_2$ is given as
		\begin{equation}\label{eq:result-pdf-joint-n2}
			\pdf_{\radius_2, \phaseresult_2}(r, \theta)
			= \begin{cases}
				\frac{1}{\maxangle^2 \sqrt{4 - r^2}} & 2\cos\left(\maxangle-\abs{\theta}\right) \leq r < 2\\
				0 & \text{otherwise.}
			\end{cases}
		\end{equation}
	\end{itemize}
\end{thm}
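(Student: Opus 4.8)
The plan is to reduce the two-step walk to a product form via a sum-to-product identity and then a linear change of variables. Writing $\Z_2 = \exp(\imag\phase_1) + \exp(\imag\phase_2) = 2\cos\!\bigl(\tfrac{\phase_1-\phase_2}{2}\bigr)\exp\!\bigl(\imag\tfrac{\phase_1+\phase_2}{2}\bigr)$ and using that $\abs{\phase_1-\phase_2}/2 \le \maxangle \le \pi/2$ makes the cosine factor nonnegative, we may identify $\radius_2 = 2\cos\!\bigl(\tfrac{\phase_1-\phase_2}{2}\bigr)$ and $\phaseresult_2 = \tfrac{\phase_1+\phase_2}{2}$. Setting $\rv{U} = (\phase_1+\phase_2)/2$ and $\rv{V} = (\phase_1-\phase_2)/2$, the Jacobian of $(\phase_1,\phase_2)\mapsto(\rv{U},\rv{V})$ is constant, so $(\rv{U},\rv{V})$ is uniform with density $\tfrac{1}{2\maxangle^2}$ on the square $\{\abs{u+v}\le\maxangle,\ \abs{u-v}\le\maxangle\}$, whose vertices are $(\pm\maxangle,0)$ and $(0,\pm\maxangle)$. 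All three parts then follow from elementary manipulations of this uniform law, since $\phaseresult_2=\rv{U}$ and $\radius_2=2\cos\rv{V}$.

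For the marginal angle I would integrate out $\rv{V}$: for fixed $\rv{U}=\theta$ the feasible $v$-interval has length $2(\maxangle-\abs{\theta})$, which gives the triangular density \eqref{eq:result-pdf-phase-n2} directly; integrating it separately for $\theta\ge 0$ and $\theta<0$ (the source of the $\sign$ term) yields \eqref{eq:result-cdf-phase-n2}. For the marginal radius, the symmetry of the square under $v\mapsto -v$ makes the marginal of $\rv{V}$ the triangular density $(\maxangle-\abs{v})/\maxangle^2$ on $[-\maxangle,\maxangle]$, so for $2\cos\maxangle\le r\le 2$ we get $\cdf_{\radius_2}(r)=\Pr(2\cos\rv{V}\le r)=\Pr(\abs{\rv{V}}\ge\arccos(r/2))=\bigl(1-\arccos(r/2)/\maxangle\bigr)^2$, which is \eqref{eq:result-cdf-radius-n2}; differentiating, with the derivative of $\arccos(r/2)$ equal to $-1/\sqrt{4-r^2}$, gives \eqref{eq:result-pdf-radius-n2}.

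For the joint density I would push the uniform law of $(\rv{U},\rv{V})$ through the map $(u,v)\mapsto(2\cos v,\,u)=(r,\theta)$. For $0<r<2$ this map is two-to-one, with preimages $(\theta,\arccos(r/2))$ and $(\theta,-\arccos(r/2))$, and its Jacobian determinant is $-2\sin v$, whose modulus at either preimage is $2\sqrt{1-r^2/4}=\sqrt{4-r^2}$. The crucial point is that the support constraint $\{\abs{u+v}\le\maxangle,\ \abs{u-v}\le\maxangle\}$ is invariant under $v\mapsto -v$, so the two preimages of $(r,\theta)$ lie in the support simultaneously or not at all; adding the two equal contributions gives $\pdf_{\radius_2,\phaseresult_2}(r,\theta)=2\cdot\tfrac{1/(2\maxangle^2)}{\sqrt{4-r^2}}=\tfrac{1}{\maxangle^2\sqrt{4-r^2}}$, while the constraint reduces to $\abs{\theta}+\arccos(r/2)\le\maxangle$, i.e.\ $r\ge 2\cos(\maxangle-\abs{\theta})$, reproducing \eqref{eq:result-pdf-joint-n2}. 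As a consistency check one can integrate this joint density over $\theta$ (resp.\ over $r$, using $\int\diff r/\sqrt{4-r^2}=\arcsin(r/2)$ and $\arcsin(\cos x)=\pi/2-x$) and recover \eqref{eq:result-pdf-radius-n2} and \eqref{eq:result-pdf-phase-n2}.

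The only genuine obstacle is the bookkeeping for this two-to-one change of variables: one must verify that the preimages always occur as simultaneously-feasible pairs — so that the density is $1/(\maxangle^2\sqrt{4-r^2})$ on all of the support rather than half that on part of it — and translate the square constraint on $(u,v)$ into the clean boundary $r=2\cos(\maxangle-\abs{\theta})$. Everything else is a routine computation once the uniform law on the rotated square is established.
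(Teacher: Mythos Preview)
Your proof is correct and takes a genuinely different, more streamlined route than the paper. The paper proceeds in four stages: it derives the marginal of $\phaseresult_2$ by convolution (using the same half-angle relation $\phaseresult_2=(\phase_1+\phase_2)/2$ that you use), then computes the \emph{conditional} law $\cdf_{\radius_2\mid\phaseresult_2}$ by working out the conditional distribution of $\phase_2$ given $\phaseresult_2$, multiplies conditional by marginal to obtain the joint density, and finally marginalizes out~$\theta$ to get the radius distribution. Your approach short-circuits this by recognising at the outset, via the sum-to-product identity, that $\radius_2=2\cos\rv{V}$ and $\phaseresult_2=\rv{U}$ with $(\rv{U},\rv{V})$ uniform on the diamond $\{\abs{u}+\abs{v}\le\maxangle\}$; all three distributions then drop out of elementary properties of this uniform law and a single two-to-one change of variables. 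What you gain is brevity and transparency (the support condition $\abs{\theta}+\arccos(r/2)\le\maxangle$ is immediate from the diamond geometry, whereas the paper must track it through the conditional construction); what the paper's approach buys is that its conditional-probability machinery is the same scaffolding it reuses for the recursive scheme at general~$\numsteps$, so the $\numsteps=2$ computation doubles as a warm-up for \autoref{thm:distributions-general-n}.
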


\begin{proof}
	In the remainder of this section, we prove \autoref{thm:distributions-n2} in the following steps:
	\begin{enumerate}
		\item Derivation of the distribution (and density) of the angle~$\phaseresult_2$ (\autoref{sub:n2-angle})
		\item Derivation of the conditional probability~$\cdf_{\radius_2 \mid \phaseresult_2}$ (\autoref{sub:n2-conditional-r-g-t})
		\item Derivation of the joint density~$\pdf_{\radius_2, \phaseresult_2}$ (\autoref{sub:n2-joint})
		\item Derivation of the marginal distribution (and density) of the radius~$\radius_2$ (\autoref{sub:n2-radius}). %
	\qedhere
	\end{enumerate}
\end{proof}

\subsection{Distribution of Angle~\texorpdfstring{$\phaseresult_2$}{θ₂}}\label{sub:n2-angle}

As a first step, we derive the marginal distribution of the resulting angle~$\phaseresult_2$ after two steps.
Since the result is clearly symmetric in~$\phaseresult_2$, we will restrict most of our discussion to the case of ${\phaseresult_2\geq 0}$ to simplify the notation.
An illustration of the problem is given in \autoref{fig:illustration-random-walk-n2}.

From the fact that both steps are of equal length, we have the first important observation about~$\phaseresult_2$ based on the half-angle theorem
\begin{equation}\label{eq:angle-result-sum-n2}
	\phaseresult_2 = \frac{\phaseresult_1 + \phase_2}{2}, \quad \text{with~} (\phaseresult_1, \phase_2) \in [-\maxangle, \maxangle]^2
\end{equation}
or, equivalently, we can express~$\phaseresult_1$ as a function of~$\phase_2$ as
\begin{equation}\label{eq:angle-phase2-linear-phase1-n2}
	\phaseresult_1 = 2\phaseresult_2 - \phase_2, \quad \text{with~} \phase_2\in[-\maxangle, \maxangle]\,.
\end{equation}

\begin{figure}
	\centering
	\begin{tikzpicture}%
	\begin{axis}[
		betterplot,
		axis lines=middle,
		axis equal image,
		height=.3\textheight,
		xlabel={$\Re\Z_2$},
		ylabel={$\Im\Z_2$},
		xmin=0,
		xmax=2.2,
		ymin=0,
		ymax=2.2,
		]
		\draw[dashed,blue] (0,0) circle [radius=2];
		\draw[thick,densely dashed] (-60:2) -- (0,0) -- (60:2);

		\draw[->,plot0,very thick] (0,0) -- (45:1);
		\draw[plot0] (0:.6) arc (0:45:.6);
		\node[inner sep=0pt,plot0] at (.45,.12) {$\phaseresult_1$};
		\node[inner sep=0pt,plot0] at (.4,.5) {$1$};

		\draw[->,plot1,very thick] (.7071,.7071) -- (1.6468,1.0491); %
		\draw[plot1] ($(.7071,.7071) + (0:.32)$) arc (0:20:.32);
		\draw[gray,dashed,dash phase=35] (.7071,.7071) -- (1.0271,.7071);
		\node[inner sep=0pt,plot1] (labelPhase2) at (.9,.95) {$\phase_2$};
		\draw (labelPhase2) -- (.95,.75);
		\node[inner sep=0pt,plot1] at (1.2,.97) {$1$};

		\draw[->,plot2,very thick] (0,0) -- (1.6468,1.0491); %
		\draw[plot2] (0:1.45) arc (0:32.5:1.45);
		\node[inner sep=0pt,plot2] at (1.3,.3) {$\phaseresult_2$};
		\node[inner sep=0pt,plot2] at (.9,.45) {$\radius_2$};

		\node[inner sep=0pt] at (.25,.9) {$\frac{\pi}{2}-\maxangle$};
		\draw[] (60:1.1) arc (60:90:1.1);
	\end{axis}
\end{tikzpicture}
	\vspace{-.75em}
	\caption{%
		Illustration of the random walk with ${\numsteps=2}$~unit steps.%
	}
	\label{fig:illustration-random-walk-n2}
\end{figure}

Based on the linear connection from~\eqref{eq:angle-phase2-linear-phase1-n2}, we can calculate the density~$\pdf_{\phaseresult_2}$ of the angle after two steps~$\phaseresult_2$ by convolution as
\begin{equation}\label{eq:pdf-angle-n2-convolution}
	\pdf_{\phaseresult_2}(\theta) = \int_{-\maxangle}^{\maxangle} \pdf_{\phaseresult_1}(2\theta-\varphi) \pdf_{\phase_2}(\varphi) \diff{\varphi}\,,
\end{equation}
where we use the fact that $\phase_2$ is independent of~$\phase_1=\phaseresult_1$.

For the uniform distribution of the angles~$\phase_i$ from~\eqref{eq:def-pdf-phases}, $\pdf_{\phaseresult_2}$ is calculated as
\begin{equation}\label{eq:pdf-phase-n2-closed-form}
	\pdf_{\phaseresult_2}(\theta) = \frac{1}{\maxangle} \left(1-\frac{\abs{\theta}}{\maxangle}\right), \quad -\maxangle \leq \theta \leq \maxangle\,,
\end{equation}
which corresponds to a triangular distribution.
The corresponding \gls{cdf} is then given as
\begin{equation*}
	\cdf_{\phaseresult_2}(\theta) = \frac{\maxangle^2 + 2\maxangle \theta - \sign(\theta) \theta^2}{2 \maxangle^2}\,.
\end{equation*}

\begin{example}[Distribution of Angle~$\phaseresult_2$]\label{ex:phase-n2}
	To illustrate and numerically verify the above derivation, we show the \gls{pdf} of~$\phaseresult_2$ for ${\maxangle=0.5}$ in \autoref{fig:pdf-angle-n2}.
	Besides the exact solution from~\eqref{eq:pdf-phase-n2-closed-form}, we show the histogram obtained from \gls{mc} simulations with $10^6$~samples.

	\begin{figure}
		\centering
		\begin{tikzpicture}%
	\begin{axis}[
		betterplot,
		height=.23\textheight,
		xlabel={Angle~$\phaseresult_2$},
		ylabel={\Gls{pdf} $\pdf_{\phaseresult_2}$},
		xmin=-.5,
		xmax=.5,
		ymin=0,
		ymax=2.1,
		]
		\pgfplotstableread{data/results-mc-n2-a0.500.dat}\tmc

		\addplot+[domain=-.5:.5] {2*(1-2*abs(x))};
		\addlegendentry{Exact~(7)}%

		\addplot+[mark repeat=3,const plot mark right] table[x=angle,y=histAngle] {\tmc};
		\addlegendentry{MC}
	\end{axis}
\end{tikzpicture}
		\vspace{-1em}
		\caption{%
			\Gls{pdf} of the resulting angle~$\phaseresult_2$ for ${\maxangle=0.5}$.
			Besides the exact solution from~\eqref{eq:result-pdf-phase-n2}, a numerical \gls{pdf} obtained through \gls{mc} simulation with $10^6$~samples is shown. (\autoref{ex:phase-n2})}
		\label{fig:pdf-angle-n2}
	\end{figure}
\end{example}
\subsection{Conditional Probability~\texorpdfstring{$\cdf_{\radius_2 \mid \phaseresult_2}$}{F(R₂|θ₂)}}\label{sub:n2-conditional-r-g-t}
As a next step, we derive the conditional probability of the resulting radius~$\radius_2$ given the resulting angle~$\phaseresult_2$ after two steps.

By definition of the random walk, the resulting length after two steps is given as
\begin{align}
	\radius_2
	&= \abs*{\exp(\imag\phaseresult_1) + \exp(\imag\phase_2)}\\
	&= \sqrt{2 \left(1 + \cos(\phase_2-\phaseresult_1)\right)}
	\label{eq:radius-n2}\,.
\end{align}

Combining~\eqref{eq:radius-n2} with relation~\eqref{eq:angle-phase2-linear-phase1-n2}, we can write the conditional probability as
\begin{multline}\label{eq:prob-cond-n2-general}
	\cdf_{\radius_2 \mid \phaseresult_2}(r\mid \theta)
	= \Pr\left(\radius_2 \leq r \mid \phaseresult_2=\theta\right)
	\\= \Pr\left(\cos\left(2(\phase_2 - \theta)\right) \leq \frac{r^2}{2}-1 \middle| \phaseresult_2=\theta\right)\,.
\end{multline}
Due to the ambiguity of the cosine, we find the two boundary points solving ${\cos\left(2(\phase_2 - \theta)\right) = \frac{r^2}{2}-1}$ as
\begin{align}
	\tilde{\phase_2} &= \theta - \frac{\arccos\left(\frac{r^2}{2}-1\right)}{2} = \theta - \arccos\frac{r}{2}\\
	\hat{\phase_2} &= \theta +\frac{\arccos\left(\frac{r^2}{2}-1\right)}{2} = \theta + \arccos\frac{r}{2}\,.
\end{align}
An illustration can be found in \autoref{fig:illustration-event-cosine-less}.
With this, we can rewrite the conditional probability in~\eqref{eq:prob-cond-n2-general} to
\begin{align}
	\cdf_{\radius_2 \mid \phaseresult_2}(r\mid \theta)
	&= 1- \Pr\left(\phase_2\in [\tilde{\phase_2}, \hat{\phase_2}] \middle| \phaseresult_2=\theta\right)\\
	&= 1-\left(\cdf_{\phase_2\mid \phaseresult_2}(\hat{\phase_2}\mid \theta) - \cdf_{\phase_2\mid \phaseresult_2}(\tilde{\phase_2}\mid \theta)\right)
	\label{eq:prob-cond-n2-events}.
\end{align}

In order to evaluate~\eqref{eq:prob-cond-n2-events}, we need to derive the conditional probability~$\cdf_{\phase_2\mid \phaseresult_2}(\varphi\mid\theta)$ first.
For the considered two-step random walk, we have the following relation
\begin{align}
	\cdf_{\phase_2\mid \phaseresult_2}(\varphi\mid\theta)
	&= \Pr\left(\phase_2 \leq \varphi \middle| \phaseresult_2=\theta \right)\\
	&\stackrel{(a)}{=} 1 - \Pr\left(\phaseresult_1 \leq 2\theta-\varphi \middle| \phaseresult_2=\theta\right)\\
	&\stackrel{(b)}{=} 1 - \unif_{[-\maxangle, \maxangle]}\left(2\theta-\varphi\right)\\
	&\stackrel{(c)}{=} \unif_{[2\theta-\maxangle, \maxangle]}\left(\varphi\right)
	\,,
\end{align}
where $(a)$~uses relation~\eqref{eq:angle-phase2-linear-phase1-n2} between the angles, $(b)$~follows from the fact that the angle after the first step~${\phaseresult_1=\phase_1}$ is uniformly distributed over~${[-\maxangle, \maxangle]}$, and $(c)$~follows from appropriately rearranging the expressions for the considered intervals and using the assumption~$\theta\geq 0$.
Hence, the conditional distribution~$\cdf_{\phase_2\mid \phaseresult_2}(\varphi\mid\theta)$ is a uniform distribution over the interval~$[2\theta-\maxangle, \maxangle]$.

\begin{figure}
	\centering
	\begin{tikzpicture}%
	\begin{axis}[
		betterplot,
		height=.21\textheight,
		trig format plots=rad,
		xlabel={Angle~$\phase_2$},
		ylabel={Radius~$\radius_2$},
		domain=-1.4:1.4,
		xmin=-1.4, xmax=1.4,
		ymin=0, ymax=2,
		enlarge y limits=.1,
		xtick={0,.3,-1.4,1.4,2*.3-1.4,1.0227342478134157,-0.4227342478134157},
		xticklabels={$0$, $\theta$,$-\maxangle$, $\maxangle$,$2\theta-\maxangle$,$\hat{\phase_2}$, $\tilde{\phase_2}$},
		ytick={0,1.5,2, 0.9071922428511551},
		yticklabels={$0$, $r$, $2$, $\radiusmin$},
		clip mode=individual,
		]
		\addplot+ {sqrt(2*(1+cos(2*(x-.3))))};
		\addplot[draw=none,fill=gray, fill opacity=.75] coordinates {(-1.4,-1) (-1.4,2.5) (2*.3-1.4,2.5) (2*.3-1.4, -1)};
	\end{axis}
\end{tikzpicture}
	\vspace{-.75em}
	\caption{Illustration of the event for calculating the conditional probability~$\cdf_{\radius_2 \mid \phaseresult_2}$ in~\eqref{eq:prob-cond-n2-general}.}
	\label{fig:illustration-event-cosine-less}
\end{figure}

Thus, with this conditional probability, we can evaluate~\eqref{eq:prob-cond-n2-events} and obtain
\begin{align}
	\cdf_{\radius_2 \mid \phaseresult_2}(r\mid \theta)
	&= 1 - \frac{\arccos\left(\frac{r}{2}\right)}{\maxangle-\theta}
	\label{eq:prob-cond-n2-closed-form}\,,
\end{align}
for ${\radiusmin(\theta) \leq r \leq 2}$ and ${0\leq\theta\leq\maxangle}$ with the minimum radius~$\radiusmin$ being
\begin{equation}
		\radiusmin(\theta) = \sqrt{2\left(1+\cos\left(2(\maxangle-\theta)\right)\right)}\,.
\end{equation}

Due to the symmetry around ${\phaseresult_2=0}$, we can replace $\theta$ by its absolute value~$\abs{\theta}$ to obtain the solution for the full range~${\theta\in[-\maxangle, \maxangle]}$.

\begin{rem}
	The general support of~${(\radius_{\numsteps}, \phaseresult_{\numsteps})}$ and the minimum radius~$\radiusmin$ will later be discussed in detail in \autoref{sub:n3-support}.
\end{rem}
\subsection{Joint Density of Radius and Angle~\texorpdfstring{$(\radius_2, \phaseresult_2)$}{(R₂, θ₂)}}\label{sub:n2-joint}
Combining the conditional probability~$\cdf_{\radius_2 \mid \phaseresult_2}$ from~\eqref{eq:prob-cond-n2-closed-form} with the \gls{pdf} of the angle~$\phaseresult_2$ from~\eqref{eq:pdf-phase-n2-closed-form} yields the joint density as
\begin{align}
	\pdf_{\radius_2, \phaseresult_2}(r, \theta)
	&= \frac{\partial}{\partial r} \cdf_{\radius_2 \mid \phaseresult_2}(r \mid \theta) \pdf_{\phaseresult_2}(\theta)\\
	&= \begin{cases}
		\frac{1}{\maxangle^2 \sqrt{4 - r^2}} & 2\cos\left(\maxangle-\abs{\theta}\right) \leq r \leq 2\\
		0 & \text{otherwise.}
	\end{cases}
	\label{eq:joint-density-n2-deriv}%
\end{align}

An important observation about the joint distribution in~\eqref{eq:joint-density-n2-deriv} is its restricted support.
While it is clear from the general assumptions that the resulting angle always needs to be within~${[-\maxangle, \maxangle]}$, the derived expression shows the exact relation between the resulting angle and radius.
In particular, for small radii (within the support), the range of possible angles is closely restricted around zero and widens as the radius increases.
This shows the strong dependence between~$\radius_2$ and~$\phaseresult_2$, in contrast to the traditional problem with a uniform distribution of the phases over the full circle.

\begin{example}[{Joint Distribution of~${(\radius_2, \phaseresult_2)}$}]\label{ex:joint-n2}
	A numerical illustration of the joint \gls{pdf} is given in \autoref{fig:pdf-joint-n2}, where we show the density together with the boundary of its support for~${\maxangle=0.5}$.
	The support widens with an increasing radius, i.e., for values of~$\radius_2$ close to two, all angles can occur.
	At the same time, the likelihood increases with~$\radius_2$.
	This is due to the fact that in order to have a small resulting radius~$\radius_2$, the angles of the two steps~$\phase_1$ and~$\phase_2$ need to cancel out, i.e., $\phase_2\approx-\phase_1$.
	As this scenario is unlikely, most of the probability mass is concentrated at larger radii.

	\begin{figure}
		\centering
		\begin{tikzpicture}
	\begin{axis}[
		width=.79\linewidth,
		height=.22\textheight,
		trig format plots=rad,
		view={0}{90},
		xmin=-.5,
		xmax=.5,
		ymax=2,
		colormap name=viridis,
		colorbar style={xtick=data},
		ylabel={Radius~$\radius_2$},
		xlabel={Angle~$\phaseresult_2$},
		axis on top,
		]
		\pgfplotstableread{data/results-joint-n2-a0.500.dat}\tbl

		\addplot3[surf,faceted color=none] table [x=angle,y=radius,z=pdfJoint]{\tbl};
		\addplot3[ultra thick,plot1,dashed,domain=-.5:.5] (x,{2*cos(.5-abs(x))},0);
	\end{axis}
\end{tikzpicture}
		\vspace{-1em}
		\caption{%
			Joint distribution of radius~$\radius_2$ and angle~$\phaseresult_2$ for two steps and maximum angle~${\maxangle=0.5}$.
			In addition, the boundary of the support is highlighted. (\autoref{ex:joint-n2})}
		\label{fig:pdf-joint-n2}
	\end{figure}
\end{example}

\subsection{Distribution of Radius~\texorpdfstring{$\radius_2$}{R₂}}\label{sub:n2-radius}

After having derived the joint distribution of~${(\radius_2, \phaseresult_2)}$, we can obtain the marginal density of the radius~$\radius_2$ as
\begin{align}
	\pdf_{\radius_2}(r)
	&= \int_{-\maxangle}^{\maxangle}\pdf_{\radius_2, \phaseresult_2}(r, \theta)\diff{\theta}\\
	&= \frac{2\left(\maxangle - \arccos\left(\frac{r}{2}\right)\right)}{\maxangle^2 \sqrt{4 - r^2}}\,,
\end{align}
for $2\cos\maxangle\leq r \leq 2$.
For the same range, we can then calculate the \gls{cdf} as
\begin{align}
	\cdf_{\radius_2}(r)
	&= \int_{0}^{r} \pdf_{\radius_2}(x) \diff{x}\\
	&= \frac{\left(\maxangle - \arccos\left(\frac{r}{2}\right)\right)^2}{\maxangle^2}
	\,.
\end{align}

\begin{example}[Distribution of Radius~$\radius_2$]\label{ex:radius-n2}
	\autoref{fig:cdf-radius-n2} shows the \gls{cdf} of~$\radius_2$ for different values of the maximum angle~$\maxangle$.
	For comparison, we also show the traditional case of a uniform distribution over the full circle, i.e., ${\maxangle=\pi}$, for which the \gls{cdf} has been taken from~\cite{Jammalamadaka2001}.

	Due to the assumption that the random walk is only on the right half-plane (for~${\maxangle\leq \frac{\pi}{2}}$), there is no \enquote{going back} and the resulting radius is only distributed on the interval~${[2\cos\maxangle, 2]}$.
	Thus, for small values of~$\maxangle$, e.g., ${\maxangle=0.5}$ in \autoref{fig:cdf-radius-n2}, the resulting radius is concentrated close to the maximum of~2.
	In the limiting case~${\maxangle=\frac{\pi}{2}}$, the distribution is supported over the full interval~${[0, 2]}$, just as the traditional case.
	However, in contrast to~${\maxangle=\pi}$, there is significantly less probability for small realizations of~$\radius_2$.

	\begin{figure}
		\centering
		\begin{tikzpicture}%
	\begin{axis}[
		betterplot,
		height=.25\textheight,
		trig format plots=rad,
		xlabel={Radius~$\radius_2$},
		ylabel={\Gls{cdf} $\cdf_{\radius_2}$},
		xmin=0,
		xmax=2,
		ymin=0,
		ymax=1,
		legend pos=north west,
		]
		\addplot+[domain=1.7551651237807455:2,samples=150,mark repeat=20] {(.5-acos(x/2))^2/.5^2};
		\addlegendentry{$\maxangle=0.5$}

		\addplot+[domain=0:2,samples=150,mark repeat=20] {(pi/2-acos(x/2))^2/(pi/2)^2};
		\addlegendentry{$\maxangle=\frac{\pi}{2}$}

		\addplot+[domain=0:2,samples=150,mark repeat=20] {(2*atan(x/sqrt(4 - x^2)))/pi};
		\addlegendentry{$\maxangle=\pi$}
	\end{axis}
\end{tikzpicture}
		\caption{%
			Distribution of the radius~$\radius_2$ after two steps for different values of the maximum angle~$\maxangle$.
			The case of~${\maxangle=\pi}$ corresponds to the traditional model of having a uniform distribution of the angles over the full circle.
			(\autoref{ex:radius-n2})}
		\label{fig:cdf-radius-n2}
	\end{figure}
\end{example}
\section{Results for \texorpdfstring{$N\geq3$}{N≥3}}\label{sec:three-steps}
Even in the traditional case with independent angles, an exact calculation of the distribution is difficult for more than two steps.
While Kluyver's general solution exists, it needs to be computed numerically, e.g., through a series representation~\cite{Bennett1948}, recursive numerical integration~\cite{Simon1985}, or the Hankel transform~\cite{Besser2022ris}.
Similar to~\cite{Simon1985}, we will provide a recursive scheme to numerically calculate the distribution for moderate values of~$\numsteps$ in this section.
However, first, we give an exact characterization of the support for an arbitrary number of steps.

\subsection{Characterization of the Support}\label{sub:n3-support}
While it is ideal to have an exact characterization of the joint distribution in closed form, it can sometimes be sufficient to only have an exact characterization of the support of a distribution.
In the aforementioned application example for detecting phase misalignment, receiving a sample outside of the support indicates certain misalignment of at least one device.
Therefore, we provide a full characterization of the support of the joint distribution of radius and angle~${(\radius_{\numsteps}, \phaseresult_{\numsteps})}$ in this section.
In particular, the support consists of all points~$\Z_{\numsteps}$ that can be reached after $\numsteps$~steps according to the definition in~\eqref{eq:def-random-walk}.
In the following, we characterize the boundary of this region and analyze it in more detail.

\begin{thm}[{Support of Radius~$\radius_{\numsteps}$ and Angle~$\phaseresult_{\numsteps}$}]\label{thm:support-bound-characterization}
	Consider a two-dimensional random walk with $\numsteps$~unit-length steps and uniformly distributed phases in the interval~$[-\maxangle, \maxangle]$, $0<\maxangle\leq\frac{\pi}{2}$.
	The boundary of the support~$\support_{\numsteps}$ is given by the outer curve
	\begin{equation}\label{eq:result-support-outer}
		\supportouter[\numsteps] = \left\{\numsteps\exp\left(\imag\varphi\right) \middle| \varphi\in[-\maxangle,\maxangle]\right\}\,,
	\end{equation}
	and the inner curve
	\begin{multline}\label{eq:result-support-inner}
		\supportinner[\numsteps] = \bigcup_{k=0}^{\numsteps-1} \big\{k\exp\left(\imag\maxangle\right)+(\numsteps-1-k)\exp\left(-\imag\maxangle\right)+\exp\left(\imag\varphi\right) \big|\\ \varphi\in[-\maxangle, \maxangle]\big\}\,.
	\end{multline}
\end{thm}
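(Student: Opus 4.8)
The plan is to realize the support as the set of reachable points of the walk and then locate its boundary by a local-openness (submersion) argument. Let $A=\{\exp(\imag\varphi):\abs{\varphi}\le\maxangle\}$ be the arc of admissible single steps and define $\Phi\colon[-\maxangle,\maxangle]^{\numsteps}\to\complexes$ by $\Phi(\varphi_1,\dots,\varphi_{\numsteps})=\sum_{i=1}^{\numsteps}\exp(\imag\varphi_i)$; then the support of $(\radius_{\numsteps},\phaseresult_{\numsteps})$ equals $\support_{\numsteps}=\Phi([-\maxangle,\maxangle]^{\numsteps})$, the $\numsteps$-fold Minkowski sum of $A$, which is compact and connected. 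Both curves in the statement clearly lie in $\support_{\numsteps}$: $\supportouter[\numsteps]$ corresponds to all steps aligned, and each piece of $\supportinner[\numsteps]$ to $\numsteps-1$ steps pinned at $\pm\maxangle$ and one free step. So it remains to prove $\partial\support_{\numsteps}=\supportouter[\numsteps]\cup\supportinner[\numsteps]=:\Gamma$; throughout I write $\langle w,u\rangle=\Re(w\overline{u})$ for the real inner product on $\complexes$.

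First I would show $\partial\support_{\numsteps}\subseteq\Gamma$. The key observation is that if $z=\Phi(\varphi)$ admits a representation in which two coordinates $\varphi_i\ne\varphi_j$ both lie in the open interval $(-\maxangle,\maxangle)$, then $z\in\operatorname{int}\support_{\numsteps}$: the partial derivatives $\partial_{\varphi_i}\Phi=\imag\exp(\imag\varphi_i)$ and $\partial_{\varphi_j}\Phi=\imag\exp(\imag\varphi_j)$ are $\reals$-linearly independent because $\varphi_i,\varphi_j\in(-\tfrac{\pi}{2},\tfrac{\pi}{2})$ are distinct (this is where $\maxangle\le\tfrac{\pi}{2}$ is used), and since perturbing only these two coordinates stays inside the cube, the inverse function theorem makes $\Phi$ open near $\varphi$. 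Hence for a boundary point $z$ every representation has all of its strictly interior coordinates equal to a common value $v$; examining the possibilities, either all $\varphi_i$ equal $v$, giving $z=\numsteps\exp(\imag v)\in\supportouter[\numsteps]$; or at most one coordinate is strictly interior and the remaining $\numsteps-1$ lie in $\{-\maxangle,\maxangle\}$, giving $z=k\exp(\imag\maxangle)+(\numsteps-1-k)\exp(-\imag\maxangle)+\exp(\imag\varphi)\in\supportinner[\numsteps]$; or there are at least two strictly interior coordinates (all equal to some $v\in(-\maxangle,\maxangle)$) together with at least one pinned coordinate, a case I claim cannot occur at a boundary point.

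To handle that last case I would show such a $z$ is in fact interior. Take two of the $v$-coordinates and one pinned coordinate, say at $\maxangle$ (the case $-\maxangle$ is symmetric). Perturbing these three within the cube, the first-order reachable displacements already fill the closed half-plane $\{w:\langle w,\exp(\imag v)\rangle\ge0\}$ — the line $\reals\,\imag\exp(\imag v)$ comes from moving the two $v$-coordinates and the ray $\reals_{\ge0}\,(-\imag\exp(\imag\maxangle))$ from decreasing the pinned one, and $\langle-\imag\exp(\imag\maxangle),\exp(\imag v)\rangle=\sin(\maxangle-v)>0$ because $0<\maxangle-v<\pi$ (again using $\maxangle\le\tfrac{\pi}{2}$). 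Moving the two $v$-coordinates in opposite directions by $\pm\eta$ contributes, at second order, a displacement proportional to $-\exp(\imag v)$, pointing strictly across the bounding line of that half-plane; a short argument combining these first- and second-order contributions then produces a full neighborhood of $z$, so $z\in\operatorname{int}\support_{\numsteps}$, contradicting $z\in\partial\support_{\numsteps}$. This finishes $\partial\support_{\numsteps}\subseteq\Gamma$.

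For the reverse inclusion and the conclusion: each $\numsteps\exp(\imag v)$, $v\in[-\maxangle,\maxangle]$, is the unique maximizer of $w\mapsto\langle w,\exp(\imag v)\rangle$ over $\support_{\numsteps}$ (value $\numsteps$), hence an exposed point, so $\supportouter[\numsteps]\subseteq\partial\support_{\numsteps}$. To get $\supportinner[\numsteps]\subseteq\partial\support_{\numsteps}$ I would first check that $\Gamma$ is a simple closed curve: consecutive arc pieces meet because piece $k$ at $\varphi=\maxangle$ equals piece $k+1$ at $\varphi=-\maxangle$ (both equal $(k+1)\exp(\imag\maxangle)+(\numsteps-1-k)\exp(-\imag\maxangle)$), the extreme pieces terminate at $\numsteps\exp(\pm\imag\maxangle)$, and the pieces are otherwise pairwise disjoint and meet $\supportouter[\numsteps]$ only there — a routine computation, helped by the fact that every center $c_k=k\exp(\imag\maxangle)+(\numsteps-1-k)\exp(-\imag\maxangle)$ has the same real part $(\numsteps-1)\cos\maxangle$, which also shows $\support_{\numsteps}\subseteq\{\Re z\ge\numsteps\cos\maxangle\}$. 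Since $\support_{\numsteps}$ is compact, connected, simply connected (its complement is connected, as $\support_{\numsteps}$ sits in that half-plane), contains $\Gamma=\partial D$ for the Jordan region $D$ enclosed by $\Gamma$, and has $\partial\support_{\numsteps}\subseteq\Gamma$, a standard plane-topology argument forces $\support_{\numsteps}=\overline{D}$ and hence $\partial\support_{\numsteps}=\Gamma$. I expect the main obstacle to be the second-order perturbation step of the third paragraph: first-order linearization is inconclusive precisely when several interior steps coincide at one angle, and the curvature term is genuinely needed to certify that such configurations lie in the interior; verifying that $\Gamma$ is a simple closed curve is routine but somewhat tedious.
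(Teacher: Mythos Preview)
Your approach is correct and substantially more rigorous than the paper's, which proceeds by a recursive geometric construction: it builds $\supportinner[\numsteps]$ from $\supportinner[\numsteps-1]$ by observing (with pictures, for $\numsteps=2,3$) that the new inner boundary consists of arcs drawn from the corner points of the old one, then writes down the general formula by analogy. The paper never proves that the constructed curve is the \emph{entire} inner boundary; it relies on the geometric intuition that the radius from a fixed starting point is minimized by taking the next step at an extreme angle. Your submersion argument, by contrast, gives a clean necessary condition for $z\in\partial\support_{\numsteps}$ (all strictly interior coordinates of any preimage must coincide), and the Jordan--curve step closes the loop rigorously. What the paper's approach buys is immediate geometric insight and the recursive picture that motivates \autoref{cor:support-parametrization}; what yours buys is an actual proof that nothing is missing from the boundary.

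Two small points. First, your justification for simple connectedness (``its complement is connected, as $\support_{\numsteps}$ sits in that half-plane'') is not valid as stated---lying in a half-plane does not by itself make the complement connected---but you do not need simple connectedness: from $\partial\support_{\numsteps}\subseteq\Gamma$ and $\Gamma\subseteq\support_{\numsteps}$ one argues directly that the bounded Jordan domain $D$ and the unbounded component $U$ of $\complexes\setminus\Gamma$ are each disjoint from $\partial\support_{\numsteps}$, hence each is entirely inside or outside $\support_{\numsteps}$; unboundedness forces $U\cap\support_{\numsteps}=\varnothing$, and a single interior point (e.g.\ any $\Phi(\varphi)$ with two distinct interior coordinates, which lies in $D$) forces $D\subseteq\support_{\numsteps}$. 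Second, in case~(b) of your trichotomy the phrase ``the remaining $\numsteps-1$'' should read ``the remaining coordinates'' to cover $m=0$ as well; the conclusion $z\in\supportinner[\numsteps]$ still holds there, since $k\exp(\imag\maxangle)+(\numsteps-k)\exp(-\imag\maxangle)$ is the endpoint of piece $k$ (at $\varphi=-\maxangle$) for $0\le k\le\numsteps-1$ and of piece $k-1$ (at $\varphi=\maxangle$) for $1\le k\le\numsteps$. Your identification of the second-order perturbation as the crux is exactly right, and the mechanism you describe (opposite perturbations of two coincident interior coordinates producing a displacement $2(\cos\eta-1)\exp(\imag v)$ across the half-plane boundary) is the correct one.
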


\begin{proof}
First, the outer bound~$\supportouter[\numsteps]$ of the support is straightforward as it consists of all points for which the angles of all steps align, i.e.,~$\varphi=\phase_1 = \phase_2 = \cdots{} = \phase_{\numsteps}$.
In this case, the radius of the resulting points is~$\numsteps$, and it is well known that this case maximizes the radius.
Hence, the outer bound~$\supportouter[\numsteps]$ is the segment of the circle with radius~$\numsteps$ between angles~$-\maxangle$ and~$\maxangle$.

For deriving the inner bound~$\supportinner[\numsteps]$ of the support, we use a recursive approach.
For a single step~${\numsteps=1}$, the support is a one-dimensional line which is the segment of the unit circle with angles in~${[-\maxangle, \maxangle]}$, i.e., ${\supportinner[1]=\supportouter[1]}$.
The full support for~${\numsteps=2}$ is then obtained through the union of all circle segments at unit distance from each point of the support~$\supportinner[1]$.
An illustration of this construction is given in \autoref{fig:support-construction-n-1-2}.
As mentioned above, the outer bound~$\supportouter$ corresponds to the points in which the angles of the two steps align, which maximizes the radius.
In contrast, the resulting radius~$\radius_2$ is minimized for a given starting point~$\exp(\imag\phaseresult_1)$ if the second step is taken with an extreme angle, i.e., the resulting point is~$\exp(\imag\phaseresult_1)+\exp(\pm\imag\maxangle)$.
Alternatively, this inner curve can be derived by considering all possible steps from the points with extreme angles after the first step, i.e.,
\begin{equation}\label{eq:support-inner-n2-construction}
	\supportinner[2] = \left\{\exp(\pm\imag\maxangle)+\exp(\imag\varphi) \middle| \varphi\in[-\maxangle, \maxangle]\right\}\,.
\end{equation}
In \autoref{fig:support-construction-n-1-2}, this inner curve~$\supportinner[2]$ is given by the union of the yellow and red circle segments.

\begin{figure}
	\centering
	\subfloat[{Illustration of the support for ${\numsteps=1}$ and ${\numsteps=2}$.\label{fig:support-construction-n-1-2}}]{\begin{tikzpicture}%
	\begin{axis}[
		betterplot,
		axis equal image,
		height=.28\textheight,
		disabledatascaling,
		xlabel={$\radius\cos\phaseresult$},
		ylabel={$\radius\sin\phaseresult$},
		ylabel shift=-.9em,
		xmin=0,
		xmax=2.1,
		ymin=-2.1,
		ymax=2.1,
		]
		\pgfmathsetmacro{\MAXANGLE}{50}
		\addplot[dashed,domain=0:3,samples=2,opacity=.25] {tan(\MAXANGLE)*x};
		\addplot[dashed,domain=0:3,samples=2,opacity=.25] {-tan(\MAXANGLE)*x};

		\draw[black,thick] (0,0) ++(\MAXANGLE:1) arc (\MAXANGLE:-\MAXANGLE:1);

		\draw[plot1,thin,densely dashed] (\MAXANGLE:1) -- ++ (\MAXANGLE:1);
		\draw[plot1,thin,densely dashed] (\MAXANGLE:1) -- ++ (-\MAXANGLE:1);
		\draw[plot1,very thick,line join=round] (\MAXANGLE:1) ++(\MAXANGLE:1) arc (\MAXANGLE:-\MAXANGLE:1);

		\draw[plot2,thin,densely dashed] (-\MAXANGLE:1) -- ++ (\MAXANGLE:1);
		\draw[plot2,thin,densely dashed] (-\MAXANGLE:1) -- ++ (-\MAXANGLE:1);
		\draw[plot2,very thick,line join=round] (-\MAXANGLE:1) ++(\MAXANGLE:1) arc (\MAXANGLE:-\MAXANGLE:1);

		\draw[plot5,densely dashed] (0:1) -- ++ (\MAXANGLE:1);
		\draw[plot5,thin,densely dashed] (0:1) -- ++ (-\MAXANGLE:1);
		\draw[plot5,thick,line join=round] (0:1) ++(\MAXANGLE:1) arc (\MAXANGLE:-\MAXANGLE:1);

		\draw[plot0,thin,densely dashed] (\MAXANGLE/2:1) -- ++ (\MAXANGLE:1);
		\draw[plot0,thin,densely dashed] (\MAXANGLE/2:1) -- ++ (-\MAXANGLE:1);
		\draw[plot0,thick,line join=round] (\MAXANGLE/2:1) ++(\MAXANGLE:1) arc (\MAXANGLE:-\MAXANGLE:1);

		\node[inner sep=0pt,anchor=south] (labelInner1) at (.44, 1.02) {$\supportinner[1]$};
		\draw (labelInner1) to[out=-90,in=180] (.8,.5);

		\node[inner sep=0pt] (labelInner21) at (.95, 1.8) {$\supportinner[2]$};
		\draw (labelInner21) -- (1.25,1.55);
		\node[inner sep=0pt] (labelInner22) at (.95, -1.8) {$\supportinner[2]$};
		\draw (labelInner22) -- (1.25,-1.55);
	\end{axis}
\end{tikzpicture}}
	\hfill
	\subfloat[{Illustration of the inner bound of the support~$\supportinner[\numsteps]$ for ${\numsteps=2}$ and ${\numsteps=3}$.\label{fig:support-construction-n-2-3}}]{\begin{tikzpicture}%
	\begin{axis}[
		betterplot,
		axis equal image,
		height=.28\textheight,
		disabledatascaling,
		xlabel={$\radius\cos\phaseresult$},
		ylabel={$\radius\sin\phaseresult$},
		ylabel shift=-.9em,
		xmin=.5,
		xmax=3,
		ymin=-2.5,
		ymax=2.5,
		]
		\pgfmathsetmacro{\MAXANGLE}{50}
		\addplot[dashed,domain=0:3,samples=2,opacity=.25] {tan(\MAXANGLE)*x};
		\addplot[dashed,domain=0:3,samples=2,opacity=.25] {-tan(\MAXANGLE)*x};

		\draw[plot0,very thick,name path=inner2,line join=round] (\MAXANGLE:1) ++(\MAXANGLE:1) arc (\MAXANGLE:-\MAXANGLE:1)
		-- ++(0,0) arc (\MAXANGLE:-\MAXANGLE:1)
		;

		\draw[plot1,thin,densely dashed] (\MAXANGLE:2) -- ++ (\MAXANGLE:1);
		\draw[plot1,thin,densely dashed] (\MAXANGLE:2) -- ++ (-\MAXANGLE:1);
		\draw[plot1,thick,line join=round] (\MAXANGLE:2) ++(\MAXANGLE:1) arc (\MAXANGLE:-\MAXANGLE:1);

		\draw[plot2,densely dashed] (0:1.286) -- ++ (\MAXANGLE:1);
		\draw[plot2,thin,densely dashed] (0:1.286) -- ++ (-\MAXANGLE:1);
		\draw[plot2,thick,line join=round] (0:1.286) ++(\MAXANGLE:1) arc (\MAXANGLE:-\MAXANGLE:1);

		\draw[plot5,thin,densely dashed] (-\MAXANGLE:2) -- ++ (\MAXANGLE:1);
		\draw[plot5,thin,densely dashed] (-\MAXANGLE:2) -- ++ (-\MAXANGLE:1);
		\draw[plot5,thick,line join=round] (-\MAXANGLE:2) ++(\MAXANGLE:1) arc (\MAXANGLE:-\MAXANGLE:1);

		\node[inner sep=0pt] (labelInner2) at (.94, .3) {$\supportinner[2]$};
		\draw (labelInner2) -- (1.6,.75);
		\draw (labelInner2) -- (1.6,-.75);

		\node[inner sep=0pt] (labelInner3) at (2.7,.75) {$\supportinner[3]$};
		\draw (labelInner3) -- (2.12,.6);
		\draw (labelInner3) -- (2.17, 1);
		\draw (labelInner3) -- (2.17, -1);
	\end{axis}
\end{tikzpicture}}
	\vspace{-.35em}
	\caption{Illustration of constructing the inner boundary of the support~$\supportinner$.}
	\vspace{-1em}
	\label{fig:illustration-support-construction}
\end{figure}

This construction can now be extended recursively for additional steps.
For~${\numsteps=3}$, both segments of~$\supportinner[2]$ can be viewed as the (shifted) support of a single step random walk.
We therefore now apply the construction that we used to derive~$\supportinner[2]$ from~$\supportinner[1]$ to obtain~$\supportinner[3]$ from~$\supportinner[2]$.
In particular, following the scheme from~\eqref{eq:support-inner-n2-construction}, we obtain the inner boundary of the support by combining all circle segments with angles in the interval~${[-\maxangle, \maxangle]}$ around the endpoints of the respective circle segments of the previous step.
An illustration of this idea for going from two steps~${\numsteps=2}$ to three steps~${\numsteps=3}$ is shown in \autoref{fig:support-construction-n-2-3}.
The final inner boundary of the support~$\supportinner[3]$ for three steps is given as the union of the yellow, red, and green circle segments.
Explicitly, this is \vspace{-.5\abovedisplayskip}
\begin{multline*}
	\supportinner[3] = \left\{2\exp(\imag\maxangle)+\exp(\imag\varphi) \middle| \varphi\in[-\maxangle, \maxangle]\right\} \, \cup \\
	\left\{\exp(\imag\maxangle) + \exp(-\imag\maxangle) + \exp(\imag\varphi) \middle| \varphi\in[-\maxangle, \maxangle]\right\}\, \cup\\
	\left\{2\exp(-\imag\maxangle)+\exp(\imag\varphi) \middle| \varphi\in[-\maxangle, \maxangle]\right\}
	\,.
\end{multline*}
This idea can be generalized for a general number of steps~$\numsteps$ to~\eqref{eq:result-support-inner} in \autoref{thm:support-bound-characterization}.
\end{proof}

In the traditional scenario, in which each step is taken into an arbitrary direction, i.e.,~$\maxangle=\pi$, it is possible to \enquote{walk backwards}, such that the resulting radius after $\numsteps$~steps can be zero.
In contrast, this is generally not possible with the assumptions considered in this work.
Only in the limiting case of~$\maxangle=\frac{\pi}{2}$ and an even number of steps, it is possible to get back to the origin (by taking an equal number of steps in the directions of~$\pm\frac{\pi}{2}$).
Thus, for the general case, the resulting point after $\numsteps$~steps will have a non-zero minimum radius~$\radiusmin[\numsteps]$, which is given in the following corollary.

\begin{cor}[{Minimum Radius~$\radiusmin$}]
	The resulting radius~$\radius_{\numsteps}$ after $\numsteps$~steps is at least~$\radiusmin[\numsteps]$ with
	\begin{equation}\label{eq:result-radius-min}
		\radiusmin[\numsteps] = \sqrt{\ceil*{\frac{\numsteps}{2}}^2 + \floor*{\frac{\numsteps}{2}}^2 + 2\floor*{\frac{\numsteps}{2}}\ceil*{\frac{\numsteps}{2}}\cos(2\maxangle)}\,.
	\end{equation}
	If the number of steps~$\numsteps$ is even, this expression can be further simplified to
	\begin{equation}\label{eq:result-radius-min-even}
		\radiusmin[\numsteps,\textnormal{even}] = \numsteps\cos\maxangle\,.
	\end{equation}
\end{cor}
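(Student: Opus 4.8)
The plan is to compute $\radiusmin[\numsteps]=\min_{z\in\support_{\numsteps}}\abs{z}$, where $\support_{\numsteps}$ is the set of reachable points, by exploiting the boundary characterization of \autoref{thm:support-bound-characterization}. Since $\support_{\numsteps}$ is the image of the compact cube $[-\maxangle,\maxangle]^{\numsteps}$ under the continuous map $(\varphi_1,\dots,\varphi_{\numsteps})\mapsto\sum_{i}\exp(\imag\varphi_i)$, it is compact, so $\abs{z}$ attains a minimum on it. The minimizer cannot be an interior point of $\support_{\numsteps}$ unless it is the origin; and the origin lies in $\support_{\numsteps}$ only if $\sum_i\cos\varphi_i=0$ for some admissible $\varphi_i$, which (as $\cos\varphi_i\geq\cos\maxangle\geq0$) forces $\maxangle=\frac{\pi}{2}$ and every $\varphi_i=\pm\frac{\pi}{2}$, in which case $\Im\sum_i\exp(\imag\varphi_i)=\sum_i(\pm1)=0$ additionally forces $\numsteps$ even --- a sub-case in which $\radiusmin[\numsteps]=0=\numsteps\cos\maxangle$, consistent with the claim. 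In all other cases the minimizer lies on $\partial\support_{\numsteps}=\supportouter[\numsteps]\cup\supportinner[\numsteps]$; every point of $\supportouter[\numsteps]$ has modulus $\numsteps$, whereas the point $\ceil*{\numsteps/2}\exp(\imag\maxangle)+\floor*{\numsteps/2}\exp(-\imag\maxangle)$, which lies in $\supportinner[\numsteps]$ (take $k=\ceil*{\numsteps/2}-1$ and $\varphi=\maxangle$ in~\eqref{eq:result-support-inner}), has modulus strictly below $\numsteps$ because $\cos(2\maxangle)<1$ for $0<\maxangle\leq\frac{\pi}{2}$. Hence the minimum is attained on the inner curve~$\supportinner[\numsteps]$.

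Next I would reduce the minimization over $\supportinner[\numsteps]$ to a finite set. By~\eqref{eq:result-support-inner}, $\supportinner[\numsteps]$ is a union of unit-radius arcs, the $k$-th one being $\{p_k+\exp(\imag\varphi)\mid\varphi\in[-\maxangle,\maxangle]\}$ with centre $p_k=k\exp(\imag\maxangle)+(\numsteps-1-k)\exp(-\imag\maxangle)$. Separating real and imaginary parts gives $\Re p_k=(\numsteps-1)\cos\maxangle\geq0$, so $\arg p_k\in[-\frac{\pi}{2},\frac{\pi}{2}]$ whenever $p_k\neq0$ (the cases $p_k=0$ are trivial, since then $\abs{p_k+\exp(\imag\varphi)}=1$ for all $\varphi$). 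As a $2\pi$-periodic function of $\varphi$, the distance $\abs{p_k+\exp(\imag\varphi)}$ from the origin has a single minimum per period, located at $\varphi=\arg p_k+\pi$; this value lies in $[\frac{\pi}{2},\frac{3\pi}{2}]\pmod{2\pi}$, hence outside $(-\maxangle,\maxangle)$, so the restriction to $[-\maxangle,\maxangle]$ is minimized at an endpoint $\varphi=\pm\maxangle$. The endpoints of the $k$-th arc are $p_k+\exp(\imag\maxangle)=(k+1)\exp(\imag\maxangle)+(\numsteps-1-k)\exp(-\imag\maxangle)$ and $p_k+\exp(-\imag\maxangle)=k\exp(\imag\maxangle)+(\numsteps-k)\exp(-\imag\maxangle)$; ranging $k$ over $\{0,\dots,\numsteps-1\}$, these are precisely the points $m\exp(\imag\maxangle)+(\numsteps-m)\exp(-\imag\maxangle)$ with $m\in\{0,1,\dots,\numsteps\}$. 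Therefore
\begin{equation*}
	\radiusmin[\numsteps]=\min_{0\leq m\leq\numsteps}\abs*{m\exp(\imag\maxangle)+(\numsteps-m)\exp(-\imag\maxangle)},
\end{equation*}
and this value is genuinely attained, since $m$ steps at angle $+\maxangle$ together with $\numsteps-m$ steps at $-\maxangle$ form an admissible configuration.

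It then remains a routine one-variable optimization. Expanding the modulus,
\begin{equation*}
	\abs*{m\exp(\imag\maxangle)+(\numsteps-m)\exp(-\imag\maxangle)}^2
	= m^2+(\numsteps-m)^2+2m(\numsteps-m)\cos(2\maxangle),
\end{equation*}
and completing the square rewrites the right-hand side as $2(1-\cos 2\maxangle)(m-\numsteps/2)^2+\numsteps^2\cos^2\maxangle$. Regarded as a function of the real variable $m$, this is a parabola, symmetric about $m=\numsteps/2$ and strictly convex because $\maxangle>0$ (so $1-\cos2\maxangle>0$). Its minimum over the integers $m\in\{0,\dots,\numsteps\}$ is therefore attained at $m=\floor*{\numsteps/2}$ (equivalently $m=\ceil*{\numsteps/2}=\numsteps-\floor*{\numsteps/2}$, by symmetry). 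Substituting $m=\floor*{\numsteps/2}$ and using $\numsteps-\floor*{\numsteps/2}=\ceil*{\numsteps/2}$ yields~\eqref{eq:result-radius-min}. If $\numsteps$ is even, then $\floor*{\numsteps/2}=\ceil*{\numsteps/2}=\numsteps/2$ and the expression collapses to $\numsteps^2\cos^2\maxangle$ via $1+\cos(2\maxangle)=2\cos^2\maxangle$; taking the (nonnegative, since $\maxangle\leq\frac{\pi}{2}$) square root gives~\eqref{eq:result-radius-min-even}.

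The main obstacle is the first reduction --- showing that the smallest modulus among all reachable points is attained on the explicitly described inner curve $\supportinner[\numsteps]$ and, within it, at one of the arc endpoints. Both steps rest on the sign condition $\Re p_k=(\numsteps-1)\cos\maxangle\geq0$, i.e.\ on the hypothesis $\maxangle\leq\frac{\pi}{2}$, and the borderline situation $\maxangle=\frac{\pi}{2}$ with $\numsteps$ even --- where the origin itself is reachable --- has to be handled separately. Once the minimization has been reduced to the integers $\{0,\dots,\numsteps\}$, the remaining work (completing the square, selecting the optimal $m$, and simplifying the even case) is mechanical.
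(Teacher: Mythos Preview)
Your proposal is correct and follows essentially the same route as the paper: invoke the boundary characterization of \autoref{thm:support-bound-characterization}, restrict attention to the inner curve $\supportinner[\numsteps]$, and identify the optimal split between steps at $+\maxangle$ and $-\maxangle$. Your version is in fact more complete than the paper's, which simply asserts the optimal configuration without justifying why the minimum over each arc occurs at an endpoint or why the balanced split $m=\floor*{\numsteps/2}$ beats all others; your endpoint argument via $\arg p_k\in[-\tfrac{\pi}{2},\tfrac{\pi}{2}]$ and the completing-the-square computation fill exactly those gaps.
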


\begin{proof}
	From \autoref{thm:support-bound-characterization}, it is clear that the points with minimum radius~$\radiusmin$ lie on the inner bound of the support~$\supportinner$.
	Their radius is therefore given by
	\begin{equation*}
		\abs{(\numsteps-1-k)\exp\left(-\imag\maxangle\right)+k\exp\left(\imag\maxangle\right)+\exp\left(\imag\varphi\right)}\,,
	\end{equation*}
	for ${\varphi\in[-\maxangle, \maxangle]}$ and ${k=0,1,\dots,\numsteps-1}$.
	For an even number of steps, the minimum is achieved for an equal number of steps at angles~${+\maxangle}$ and~${-\maxangle}$.
	This corresponds to~${k=\frac{\numsteps}{2}}$ and~${\varphi=-\maxangle}$ (or equivalently ${k=\frac{\numsteps}{2}-1}$ and~${\varphi=\maxangle}$).
	The radius of the resulting point is therefore
	\begin{equation*}
		\radiusmin[\numsteps,\textnormal{even}] = \abs*{\frac{\numsteps}{2} \exp\left(-\imag\maxangle\right) + \frac{\numsteps}{2} \exp\left(\imag\maxangle\right)}\,,
	\end{equation*}
	which can be simplified to~\eqref{eq:result-radius-min-even}.
	For an odd number of steps, the minimum radius is achieved for an equal number of steps with angles~$\pm\maxangle$ plus an additional step in the direction of either~$-\maxangle$ or~$+\maxangle$.
	The radius of the resulting point is therefore
	\begin{equation*}
		\radiusmin[\numsteps,\textnormal{odd}] = \abs*{\frac{\numsteps-1}{2} \exp\left(-\imag\maxangle\right) + \left(\frac{\numsteps-1}{2}+1\right) \exp\left(\imag\maxangle\right)}\,.
	\end{equation*}
	Expanding the radius and combining it with the fact that
	\begin{equation*}
		\floor*{\frac{\numsteps}{2}} = \frac{\numsteps-1}{2} \quad\text{and}\quad \ceil*{\frac{\numsteps}{2}} = \frac{\numsteps+1}{2}
	\end{equation*}
	for odd~$\numsteps$, we obtain~\eqref{eq:result-radius-min} for general~$\numsteps$.
\end{proof}

While the description of the inner bound of the support~$\supportinner$ in~\eqref{eq:result-support-inner} is an exact representation, it can be advantageous to have a direct way to compute the set.
Therefore, we provide a simple parametrization of~$\supportinner$ in the following, which allows a straightforward computation and gives a parametrized functional connection between radius~$\radius_{\numsteps}$ and angle~$\phaseresult_{\numsteps}$.

\begin{cor}[{Parametrization of the Support}]\label{cor:support-parametrization}
	The inner boundary of the support~$\supportinner[\numsteps]$ from~\eqref{eq:result-support-inner} is given in the $\radius_{\numsteps}$-$\phaseresult_{\numsteps}$ space through the parametric function
	\begin{multline}
		\radius_{\numsteps,\textnormal{in}}(t) = \Big[\big((\numsteps-1-2k(t))\sin\maxangle + \sin\varphi(t)\big)^2 \\{+ \big((\numsteps-1)\cos\maxangle + \cos\varphi(t)\big)^2}\Big]^{\frac{1}{2}}
	\end{multline}
	\begin{equation}
		\phaseresult_{\numsteps,\textnormal{in}}(t) = \arctan\frac{(\numsteps-1-2k(t))\sin\maxangle + \sin\varphi(t)}{(\numsteps-1)\cos\maxangle + \cos\varphi(t)}
	\end{equation}
	with
	\begin{align}
		k(t) &= \ceil*{\numsteps t - 1}\\
		\varphi(t) &= \maxangle \big(2 (\numsteps t - k(t)) - 1\big)
	\end{align}
	for $0\leq t \leq 1$.
\end{cor}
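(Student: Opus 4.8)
The plan is to translate each point in the union description \eqref{eq:result-support-inner} from the complex plane into polar $(\radius,\phaseresult)$ coordinates, and then recognise the scalar $t$ as a convenient reparametrization of the two indices $k$ and $\varphi$ that enumerate that union. First I would note that the union in \eqref{eq:result-support-inner} is invariant under the relabeling $k\mapsto\numsteps-1-k$, so it may equally be written with the coefficients of $\exp(\imag\maxangle)$ and $\exp(-\imag\maxangle)$ being $\numsteps-1-k$ and $k$. Then, for fixed $k\in\{0,\dots,\numsteps-1\}$ and $\varphi\in[-\maxangle,\maxangle]$, I would expand the real and imaginary parts of $(\numsteps-1-k)\exp(\imag\maxangle)+k\exp(-\imag\maxangle)+\exp(\imag\varphi)$: since cosine is even the two $\pm\maxangle$ terms add, giving real part $(\numsteps-1)\cos\maxangle+\cos\varphi$, while in the imaginary part they partially cancel, giving $(\numsteps-1-2k)\sin\maxangle+\sin\varphi$. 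The radius of this point is the Euclidean norm of that pair and its argument is the arctangent of the ratio — which are exactly $\radius_{\numsteps,\textnormal{in}}$ and $\phaseresult_{\numsteps,\textnormal{in}}$ once $k$ and $\varphi$ are renamed $k(t)$ and $\varphi(t)$. A small point that needs a sentence: the arctangent really does return the argument, because $\maxangle\le\frac\pi2$ forces $\cos\varphi\ge\cos\maxangle\ge0$, so the real part is nonnegative (in fact at least $\numsteps\cos\maxangle$) and the argument lies in $[-\frac\pi2,\frac\pi2]$.

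It then remains to check that $t\mapsto(k(t),\varphi(t))$ sweeps out precisely $\{0,1,\dots,\numsteps-1\}\times[-\maxangle,\maxangle]$ as $t$ ranges over $[0,1]$. This is a direct computation with the ceiling function: for $t\in\big(\tfrac{k}{\numsteps},\tfrac{k+1}{\numsteps}\big]$ one has $\numsteps t-1\in(k-1,k]$, hence $k(t)=\ceil{\numsteps t-1}=k$; on that same subinterval $s:=\numsteps t-k(t)$ increases over $(0,1]$, so $\varphi(t)=\maxangle(2s-1)$ increases over $(-\maxangle,\maxangle]$ and attains every value of the interval. Thus each of the $\numsteps$ equal subintervals of $[0,1]$ reproduces one of the $\numsteps$ circle segments in \eqref{eq:result-support-inner}, and the union over $t\in[0,1]$ is exactly $\supportinner[\numsteps]$. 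The degenerate endpoint $t=0$ (where the formula nominally returns $k=-1$) is read by continuity as the $\varphi=-\maxangle$ end of the $k=0$ segment and contributes nothing new.

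I do not expect a genuine obstacle here; the statement is essentially a change of variables plus bookkeeping. The two places that require the most care are (i) confirming that the sign on the imaginary part written in the corollary agrees with \eqref{eq:result-support-inner} \emph{only after} the $k\mapsto\numsteps-1-k$ relabeling, so that the parametric curve and the set in the theorem truly coincide rather than merely having congruent shapes, and (ii) handling the half-open subintervals and the behaviour of $\ceil{\cdot}$ at their endpoints so that the enumeration of $(k,\varphi)$ is exhaustive and non-redundant.
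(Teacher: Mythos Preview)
Your proposal is correct. The paper does not give an explicit proof of this corollary at all; it is stated as an immediate consequence of \autoref{thm:support-bound-characterization} and left to the reader, so your write-up actually supplies more detail than the source. In particular, your observation that the sign on the imaginary part in the corollary matches~\eqref{eq:result-support-inner} only after the relabeling $k\mapsto\numsteps-1-k$ is a genuine point worth recording, as is the verification that $t\mapsto(k(t),\varphi(t))$ bijects $(0,1]$ onto the index set $\{0,\dots,\numsteps-1\}\times(-\maxangle,\maxangle]$ and that $\arctan$ returns the true argument because the real part is nonnegative. The only residual wrinkle is the endpoint $t=0$, where the literal formula gives $k(0)=-1$; you handle this appropriately by continuity from the right.
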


\begin{example}[{Illustration of the Support}]\label{ex:support-illustration}
	\begin{figure}
		\centering
		\subfloat[{Support in Cartesian coordinates (real and imaginary parts of~$\Z$)\label{fig:illustration-support-n3}}]{\begin{tikzpicture}%
	\begin{axis}[
		betterplot,
		axis equal image,
		height=.285\textheight,
		disabledatascaling,
		xlabel={$\Re\Z$},
		ylabel={$\Im\Z$},
		ylabel shift=-.9em,
		xmin=0,
		xmax=3,
		ymin=-3,
		ymax=3,
		]
		\pgfmathsetmacro{\MAXANGLE}{48.7} %
		\addplot[dashed,domain=0:3,samples=2,opacity=.25] {tan(\MAXANGLE)*x};
		\addplot[dashed,domain=0:3,samples=2,opacity=.25] {-tan(\MAXANGLE)*x};

		\draw[plot0,very thick,name path=inner,line join=round] (\MAXANGLE:2) ++(\MAXANGLE:1) arc (\MAXANGLE:-\MAXANGLE:1)
		-- ++(0,0) arc (\MAXANGLE:-\MAXANGLE:1)
		-- ++(0,0) arc (\MAXANGLE:-\MAXANGLE:1)
		;
		\draw[plot1,very thick,name path=outer] (0,0) ++(\MAXANGLE:3) arc (\MAXANGLE:-\MAXANGLE:3);

		\tikzfillbetween[of=inner and outer]{plot2, opacity=0.2};

		\draw[plot0,dashed] (0,0) ++(\MAXANGLE:{sqrt(5+4*cos(2*\MAXANGLE))}) arc (\MAXANGLE:-\MAXANGLE:{sqrt(5+4*cos(2*\MAXANGLE))});

		\node[inner sep=0pt] (labelOuter) at (2.6,2.4) {$\supportouter$};
		\draw (labelOuter) -- (45:3.05);

		\node[inner sep=0pt] (labelInner) at (1.8,1.7) {$\supportinner$};
		\draw (labelInner) -- (2.2,1.75);

		\draw[|<->|] (0, 0) -- node[fill=white,inner sep=2pt] {$\radiusmin$} (-40:{sqrt(5+4*cos(2*\MAXANGLE))});

		\node[plot2] at (2.6,0) {$\support$};
	\end{axis}
\end{tikzpicture}}
		\hfill
		\subfloat[{Support in polar coordinates\label{fig:illustration-support-n3-radius-angle}}]{\begin{tikzpicture}%
	\begin{axis}[
		betterplot,
		width=.55\linewidth,
		height=.28\textheight,
		xlabel={$\phaseresult_{3}$},
		ylabel={$\radius_3$},
		ylabel shift=-1.5em,
		xmin=-.85,
		xmax=.85,
		ymin=2,
		ymax=3,
		enlarge y limits=.05,
		extra y ticks={2.118},
		extra y tick labels={$\radiusmin[3]$},
		]
		\pgfmathsetmacro{\MAXANGLE}{48.7} %

		\pgfplotstableread{data/results-support-a0.850-n3.dat}\ta

		\addplot+[mark repeat=30,name path=inner] table[x=angleInner,y=radInner] {\ta};
		\addplot+[domain=-.85:.85,samples=15,name path=outer] {3};

		\tikzfillbetween[of=inner and outer]{plot2, opacity=0.2};
		\node[plot2] at (0, 2.7) {$\support_{3}$};
	\end{axis}
\end{tikzpicture}}
		\caption{%
			Support and its boundaries~$\supportinner$ and~$\supportouter$ for~${\numsteps=3}$ and~${\maxangle=0.85}$.
			Additionally, a section of the circle with the minimum radius~$\radiusmin$ is shown.
			(\autoref{ex:support-illustration})
		}
		\label{fig:illustration-support}
	\end{figure}

	A visualization of the support and its inner and outer boundary~$\supportinner$ and~$\supportouter$, respectively, is shown in \autoref{fig:illustration-support} for~${\numsteps=3}$ and~${\maxangle=0.85}$.
	First, in \autoref{fig:illustration-support-n3}, the support is shown for Cartesian coordinates, i.e., in terms of the real and imaginary components of~$\Z_3$.
	Additionally, a circle segment with the minimum radius~$\radiusmin$ is indicated.
	For the above parameters, the minimum radius is evaluated according to~\eqref{eq:result-radius-min} as~${\radiusmin[3]=2.118}$.
	Next, in \autoref{fig:illustration-support-n3-radius-angle}, the support is shown in polar coordinates, i.e., in terms of radius~$\radius_3$ and angle~$\phaseresult_3$.
	Similar to the support for two steps, cf.~\autoref{fig:pdf-joint-n2}, there is a strong dependence between radius and angle.
	In particular, for angles close to the maximum, the radius is concentrated at values close to~$\numsteps$.
\end{example}

While the parametrization of the inner bound of the support~$\supportinner$ in~\autoref{cor:support-parametrization} provides an easy way to compute~$\supportinner$ in polar coordinates, it might be desirable to describe the radius as a direct function of the angle.
However, depending on the maximum angle~$\maxangle$, this might not be possible, as will be shown in the following result.

\begin{lem}[{Condition for Unique Angles in the Support}]\label{lem:condition-support-unique}
	The radius~$\radius_{\numsteps}$ of the inner boundary of the support~$\supportinner[\numsteps]$ is only a function of the angle~$\phaseresult_{\numsteps}$ if the following condition holds
	\begin{equation}\label{eq:result-condition-support-unique}
		\maxangle \leq \frac{1}{2}\arccos\left(\frac{-1}{\numsteps-1}\right)\,.
	\end{equation}
\end{lem}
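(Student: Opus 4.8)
The plan is to use the explicit form of the inner boundary given by \autoref{thm:support-bound-characterization}: $\supportinner[\numsteps]$ is a union of $\numsteps$ unit-radius arcs, the $k$-th one being $\{c_k+\exp(\imag\varphi)\mid\varphi\in[-\maxangle,\maxangle]\}$ with center $c_k=k\exp(\imag\maxangle)+(\numsteps-1-k)\exp(-\imag\maxangle)=(\numsteps-1)\cos\maxangle+\imag(2k-\numsteps+1)\sin\maxangle$. First I would note that $c_{k+1}-c_k=2\imag\sin\maxangle=\exp(\imag\maxangle)-\exp(-\imag\maxangle)$, so the $\varphi=\maxangle$ endpoint of arc $k$ is the $\varphi=-\maxangle$ endpoint of arc $k+1$; hence the arcs join end to end into one continuous chain, with all centers on the vertical line $\Re z=(\numsteps-1)\cos\maxangle$. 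The reduction is then: $\radius_{\numsteps}$ is a single-valued function of $\phaseresult_{\numsteps}$ on $\supportinner[\numsteps]$ iff the polar angle $\theta$ varies monotonically along this chain, and — the chain being a continuous concatenation of arcs — this holds iff $\theta$ is monotone on each individual arc (equivalently, one may differentiate the parametrisation of \autoref{cor:support-parametrization}, on which $k$ is piecewise constant and $\varphi$ piecewise affine).

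Second I would compute the sign of $\mathrm{d}\theta/\mathrm{d}\varphi$ on arc $k$. Writing an arc point as $X+\imag Y$ with $X=(\numsteps-1)\cos\maxangle+\cos\varphi>0$ and $Y=(2k-\numsteps+1)\sin\maxangle+\sin\varphi$, one gets $\mathrm{d}\theta/\mathrm{d}\varphi=(X\cos\varphi+Y\sin\varphi)/(X^2+Y^2)$, whose sign is that of $g_k(\varphi):=(\numsteps-1)\cos\maxangle\cos\varphi+(2k-\numsteps+1)\sin\maxangle\sin\varphi+1$. Since $\abs*{\arctan\frac{(2k-\numsteps+1)\sin\maxangle}{(\numsteps-1)\cos\maxangle}}\le\arctan(\tan\maxangle)=\maxangle$, the sinusoid $g_k$ attains its maximum inside $[-\maxangle,\maxangle]$, so its minimum over that interval is at whichever endpoint lies farther from the critical point, namely $(\numsteps-1)\cos^2\maxangle-\abs{2k-\numsteps+1}\sin^2\maxangle+1$. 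Minimising also over $k\in\{0,\dots,\numsteps-1\}$ (the worst cases being $k=0$ and $k=\numsteps-1$, where $\abs{2k-\numsteps+1}=\numsteps-1$) gives $(\numsteps-1)(\cos^2\maxangle-\sin^2\maxangle)+1=(\numsteps-1)\cos(2\maxangle)+1$. Hence $g_k(\varphi)\ge0$ for all $k$ and all $\varphi\in[-\maxangle,\maxangle]$ — i.e., $\theta$ is non-decreasing along the whole chain, and strictly increasing off a finite set, so $\supportinner[\numsteps]$ is a genuine graph $\radius_{\numsteps}=g(\phaseresult_{\numsteps})$ — exactly when $(\numsteps-1)\cos(2\maxangle)\ge-1$, which rearranges to~\eqref{eq:result-condition-support-unique}. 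The equality case in~\eqref{eq:result-condition-support-unique} is still admissible since $g_k$ then vanishes only at isolated points.

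For the converse I would show that if~\eqref{eq:result-condition-support-unique} fails the radius really is multivalued in the angle. When $(\numsteps-1)\cos(2\maxangle)+1<0$ one has $g_0(\maxangle)<0$ while $g_1(-\maxangle)=(\numsteps-1)\cos^2\maxangle+(\numsteps-3)\sin^2\maxangle+1>0$, so traversing the chain $\theta$ is strictly decreasing just before the arc-$0$/arc-$1$ junction and strictly increasing just after it; thus $\theta$ has a strict local minimum there, and every angle slightly above the junction value is hit by a point of arc $0$ and a point of arc $1$. A short computation shows the two unit circles involved (distinct centers at distance $2\sin\maxangle<2$ when $\maxangle<\frac{\pi}{2}$) meet, within $\varphi\in[-\maxangle,\maxangle]$, only at their shared endpoint, so these two points are distinct and therefore carry different radii. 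Since~\eqref{eq:result-condition-support-unique} forces $\maxangle<\frac{\pi}{2}$ whenever $\numsteps\ge3$, the only case left is $\maxangle=\frac{\pi}{2}$ with $\numsteps\ge3$, where the chain is a stack of half-circles and the failure of the graph property can be checked directly; for $\numsteps\le2$ the condition holds trivially, consistent with the explicit support in~\autoref{thm:distributions-n2}.

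I expect the main obstacle to be the converse, not the computation: the nested minimisation over $\varphi$ and $k$ is routine bookkeeping, whereas turning ``$\theta$ is not monotone along the chain'' into ``the radius is genuinely multivalued'' requires the little geometric fact that the chain does not self-intersect near a junction — which is exactly where the strict inequality $\maxangle<\frac{\pi}{2}$ is needed — together with the separate treatment of $\maxangle=\frac{\pi}{2}$.
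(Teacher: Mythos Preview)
Your proof is correct and rests on the same geometric criterion as the paper's: $\supportinner[\numsteps]$ is a graph $\radius(\phaseresult)$ precisely when the polar angle is monotone along the chain of arcs, with the extremal situation occurring at the junction between arc~$0$ and arc~$1$ (equivalently, by the reflection symmetry, between arcs~$\numsteps-2$ and~$\numsteps-1$, which is the segment the paper singles out). The paper reaches the condition by working in Cartesian coordinates and comparing the tangent slope $y'(x)$ of the critical arc to the radial slope $y/x$ at the single point $\varphi=-\maxangle$ on the $k=\numsteps-1$ segment, after arguing informally that this is where the first ambiguity appears. You instead compute $\mathrm{d}\theta/\mathrm{d}\varphi$ on every arc and minimise $g_k(\varphi)$ over all $k$ and $\varphi$, which makes the identification of the extremal arc and point fully rigorous rather than heuristic; your explicit converse argument (the strict local minimum of $\theta$ at the junction producing two distinct points with equal angle) is also more complete than the paper's, which leaves that direction to the figure. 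Both routes collapse to the same inequality $(\numsteps-1)\cos(2\maxangle)+1\ge 0$.
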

\begin{proof}
	In order to have the radius~$\radius_{\numsteps}$ of the inner boundary of the support~$\supportinner$ as a function of the angle~$\phaseresult_{\numsteps}$, each line with slope~$\tan\theta$ through the origin in the Cartesian coordinate system needs to have at most one intersection with~$\supportinner$.
	Otherwise, there exist two points in~$\supportinner$ that have the same angle~$\theta$ but different radii in polar coordinates.
	Due to symmetry around ${\Im{\Z_{\numsteps}}=0}$, we restrict the discussion to the upper half-plane in the following.

	When increasing~$\maxangle$, the first ambiguity of the radius will occur at the circle segment with the largest imaginary part, i.e., for ${k=\numsteps-1}$, cf.~\eqref{eq:result-support-inner}.
	Therefore, if there is only a single point for each angle~$\theta$ in this segment of~$\supportinner$, this will be also true for every other segment~$k$.
	According to~\eqref{eq:result-support-inner}, this critical segment is given as all points with (Cartesian) coordinates
	\begin{align*}
		x &= (\numsteps-1)\cos\maxangle + \cos\varphi\\
		y &= (\numsteps-1)\sin\maxangle + \sin\varphi
	\end{align*}
	where~$x$ and~$y$ represent the real and imaginary part, respectively, and~$\varphi\in[-\maxangle,\maxangle]$.
	As we are interested in comparing the slope of the tangent at the support with the angle~$\theta$, we express the lower half of the circle segment (${\varphi\in[-\maxangle, 0]}$) as the function
	\begin{equation*}
		y(x) = (\numsteps-1)\sin\maxangle - \sqrt{1-(x-(\numsteps-1)\cos\maxangle)^2}\,,
	\end{equation*}
	with derivative
	\begin{equation*}
		y'(x) = \frac{x-(\numsteps-1)\cos\maxangle}{\sqrt{1-(x-(\numsteps-1)\cos\maxangle)^2}}\,.
	\end{equation*}
	In order to have no ambiguity, the slope of the tangent needs to be larger than the slope of the line with angle~$\theta$, which is given as
	\begin{equation*}
		\tan\theta = \frac{y}{x} = \frac{(\numsteps-1)\sin\maxangle + \sin\varphi}{(\numsteps-1)\cos\maxangle + \cos\varphi}\,.
	\end{equation*}
	Since the slope of the tangent is increasing in the lower half of the circle segment, the critical point with the lowest slope is given at~${\varphi=-\maxangle}$.
	An illustration can be found in \autoref{fig:illustration-proof-support-unique}.
	It shows an example for which the slope of the tangent is smaller than~$\tan\theta$ (for~${\maxangle=1.05}$) and, therefore, two points exist in~$\supportinner$ with the same angle~$\theta$.
	In contrast, for the smaller value of the maximum angle~${\maxangle=0.85}$, the slope of the tangent is larger than~$\tan\theta$ and no ambiguity exists.

	\begin{figure}
		\centering
		\begin{tikzpicture}[spy using outlines={circle, magnification=2.5, connect spies}]%
	\begin{axis}[
		betterplot,
		axis equal image,
		height=.27\textheight,
		disabledatascaling,
		xlabel={$x$},
		ylabel={$y$},
		xmin=0,
		xmax=3.1,
		ymin=0,
		ymax=3.3,
		legend pos=north west,
		]
		\pgfmathsetmacro{\MAXANGLE}{48.7} %
		\pgfmathsetmacro{\LARGERANGLE}{60} %

		\draw[plot0,very thick,name path=inner,line join=round] (\MAXANGLE:3) ++(\MAXANGLE:1) arc (\MAXANGLE:-\MAXANGLE:1)
		-- ++(0,0) arc (\MAXANGLE:-\MAXANGLE:1)
		;
		\addlegendimage{very thick,plot0}
		\addlegendentry{$\maxangle=0.85$}

		\draw[plot1,very thick,name path=inner,line join=round] (\LARGERANGLE:3) ++(\LARGERANGLE:1) arc (\LARGERANGLE:-\LARGERANGLE:1)
		-- ++(0,0) arc (\LARGERANGLE:-\LARGERANGLE:1)
		;
		\addlegendimage{very thick,plot1}
		\addlegendentry{$\maxangle=1.05$}

		\addplot[densely dashdotted,domain=0:4,samples=2] {((2*sin(\MAXANGLE))/(4*cos(\MAXANGLE)))*x};
		\draw (0:.9) arc (0:{atan((2*sin(\MAXANGLE))/(4*cos(\MAXANGLE)))}:.9);
		\node[inner sep=0pt,anchor=south] at (.65,.1) {$\theta$};

		\addplot[plot0,thick,densely dashed,domain=2:4,samples=2] {1/tan(\MAXANGLE)*(x-4*cos(\MAXANGLE)) + 2*sin(\MAXANGLE)};

		\addplot[domain=0:4,samples=2,densely dashdotted] {((2*sin(\LARGERANGLE))/(4*cos(\LARGERANGLE)))*x};
		\addplot[plot1,thick,densely dashed,domain=1.5:2.75,samples=2] {1/tan(\LARGERANGLE)*(x-4*cos(\LARGERANGLE)) + 2*sin(\LARGERANGLE)};

		\node[semithick, circle, draw, minimum size=.667cm, inner sep=0pt] (spypoint) at (2.15,1.85) {};
		\node[semithick, circle, draw, minimum size=2cm, inner sep=0pt] (spyviewer) at (.8,1.8) {};
		\draw (spypoint) edge (spyviewer);

		\pgfmathsetmacro{\spyfactor}{1.7320508075688772}
		\begin{scope}
			\clip (.8,1.8) circle (1cm-.5\pgflinewidth);
			\pgfmathparse{\spyfactor^2/(\spyfactor-1)}
			\begin{scope}[scale around={\spyfactor:($(.8,1.8)!\spyfactor^2/(\spyfactor^2-1)!(2.15,1.85)$)}]
				\addplot[plot1,domain=2:2.4,thick] {3*sin(\LARGERANGLE)-sqrt(1-(x-3*cos(\LARGERANGLE))^2)};
				\addplot[plot1,domain=2:2.4,thick] {sin(\LARGERANGLE)+sqrt(1-(x-3*cos(\LARGERANGLE))^2)};
				\addplot[domain=0:4,samples=2,densely dashdotted] {((2*sin(\LARGERANGLE))/(4*cos(\LARGERANGLE)))*x};
				\addplot[plot1,thick,densely dashed,domain=1.5:2.75,samples=2] {1/tan(\LARGERANGLE)*(x-4*cos(\LARGERANGLE)) + 2*sin(\LARGERANGLE)};
			\end{scope}
		\end{scope}
	\end{axis}
\end{tikzpicture}
		\caption{Illustration of the relevant slopes of~$\supportinner$ in comparison with the angle~$\theta$, determining whether the radius of~$\supportinner$ is a function of the angle. (\autoref{lem:condition-support-unique})}
		\label{fig:illustration-proof-support-unique}
	\end{figure}

	Therefore, at this critical point, we get the two slopes as
	\begin{align*}
		y'(\numsteps\cos\maxangle) &= \frac{1}{\tan\maxangle}\\
		\tan\theta &= \frac{(\numsteps-2)\sin\maxangle}{\numsteps\cos\maxangle} = \frac{\numsteps-2}{\numsteps}\tan\maxangle\,.
	\end{align*}
	With the condition~${y'(\numsteps\cos\maxangle)\geq\tan\theta}$ and the above expressions, we obtain the condition on the maximum angle~$\maxangle$ given in~\eqref{eq:result-condition-support-unique}.
\end{proof}

\begin{cor}\label{cor:condition-support-always}
	If ${\maxangle\leq\frac{\pi}{4}}$, the radius of the inner boundary of the support~$\supportinner$ is always a function of the angle, independent of the number of steps~$\numsteps$.
\end{cor}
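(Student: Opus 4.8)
The plan is to deduce the statement directly from \autoref{lem:condition-support-unique}, which already tells us that the radius of $\supportinner[\numsteps]$ is a function of the angle whenever $\maxangle \leq \frac{1}{2}\arccos\!\left(\frac{-1}{\numsteps-1}\right)$. Thus it is enough to show that the right-hand side, regarded as a function $g(\numsteps) = \frac{1}{2}\arccos\!\left(\frac{-1}{\numsteps-1}\right)$ of the step count, never drops below $\frac{\pi}{4}$ for $\numsteps \geq 2$. Once that is established, the hypothesis $\maxangle \leq \frac{\pi}{4}$ forces $\maxangle < g(\numsteps)$ for every $\numsteps$, and \autoref{lem:condition-support-unique} applies uniformly in the number of steps.

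The first step is to pin down the monotonicity of $g$. For $\numsteps \geq 2$ the argument $\frac{-1}{\numsteps-1}$ is negative and strictly increasing toward $0$ as $\numsteps$ grows; since $\arccos$ is strictly decreasing on $[-1,1]$, the map $\numsteps \mapsto \arccos\!\left(\frac{-1}{\numsteps-1}\right)$ is strictly decreasing, equal to $\arccos(-1) = \pi$ at $\numsteps = 2$ and tending to $\arccos(0) = \frac{\pi}{2}$. Hence $g$ is strictly decreasing with $\inf_{\numsteps \geq 2} g(\numsteps) = \lim_{\numsteps \to \infty} g(\numsteps) = \frac{\pi}{4}$, the infimum being approached but not attained.

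Consequently $g(\numsteps) > \frac{\pi}{4}$ for every finite $\numsteps \geq 2$, so the assumption $\maxangle \leq \frac{\pi}{4}$ gives $\maxangle < g(\numsteps)$, and the condition of \autoref{lem:condition-support-unique} holds for all $\numsteps \geq 2$. The only remaining case, $\numsteps = 1$, is trivial: $\supportinner[1]$ is the unit-circle arc with angles in $[-\maxangle,\maxangle]$, on which the radius is constantly $1$ and therefore a (constant) function of the angle. This would complete the proof.

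I do not anticipate a genuine obstacle. The only point that needs care is that $\arccos$ is decreasing while its argument is increasing, so the two monotonicities compose to give a decreasing $g$; and that the limiting value $\frac{\pi}{4}$ is a non-attained infimum, which is precisely why the non-strict inequality in \autoref{lem:condition-support-unique} is comfortably satisfied for all $\numsteps$.
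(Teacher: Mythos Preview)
Your proposal is correct and follows essentially the same approach as the paper: both argue that the right-hand side of~\eqref{eq:result-condition-support-unique} is decreasing in~$\numsteps$ with limit~$\frac{\pi}{4}$, so that $\maxangle\leq\frac{\pi}{4}$ guarantees the condition of \autoref{lem:condition-support-unique} for every~$\numsteps$. Your version is slightly more detailed in justifying the monotonicity and in treating the case~$\numsteps=1$ separately, but the argument is the same.
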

\begin{proof}
	The right-hand side of~\eqref{eq:result-condition-support-unique} is decreasing in the number of steps~$\numsteps$, and approaches the limit
	\begin{equation*}
		\lim_{\numsteps\to\infty}\frac{1}{2}\arccos\left(\frac{-1}{\numsteps-1}\right) = \frac{\pi}{4}\,.
	\end{equation*}
	Therefore, if the maximum angle~$\maxangle$ is smaller than this value, condition~\eqref{eq:result-condition-support-unique} in \autoref{lem:condition-support-unique} is fulfilled for all~$\numsteps$.
\end{proof}

\begin{example}[{Non-unique Radius in~$\supportinner$}]\label{ex:support-non-unique}
	An illustration of a support for which condition~\eqref{eq:result-condition-support-unique} does not hold is given in \autoref{fig:illustration-support-non-unique}.
	The number of steps is set to~${\numsteps=4}$ for which~\eqref{eq:result-condition-support-unique} gives a maximum~$\maxangle$ of around~\num{0.96}.
	However, the maximum angle in \autoref{fig:illustration-support-non-unique} is set to~${\maxangle=1.4}$.
	Therefore, there exist multiple points in the inner boundary of the support~$\supportinner$ with the same angle.
	As a consequence, the radius~$\radius_4$ of~$\supportinner[4]$ is not a function of the angle~$\phaseresult_{4}$, which can be directly seen in \autoref{fig:illustration-support-n4-radius-angle}.
	In \autoref{fig:illustration-support-n4}, this can be seen as the black dashed line has more than one intersection with~$\supportinner$.

	\begin{figure}
		\centering
		\subfloat[{Support in Cartesian coordinates (real and imaginary parts of~$\Z$)\label{fig:illustration-support-n4}}]{\begin{tikzpicture}%
	\begin{axis}[
		betterplot,
		axis equal image,
		height=.27\textheight,
		disabledatascaling,
		xlabel={$\radius_4\cos\phaseresult_{4}$},
		ylabel={$\radius_4\sin\phaseresult_{4}$},
		ylabel shift=-.9em,
		xmin=0,
		xmax=4,
		ymin=-4,
		ymax=4,
		]
		\pgfmathsetmacro{\MAXANGLE}{80.21}
		\addplot[dashed,domain=0:4,samples=2,opacity=.25] {tan(\MAXANGLE)*x};
		\addplot[dashed,domain=0:4,samples=2,opacity=.25] {-tan(\MAXANGLE)*x};

		\draw[plot0,very thick,name path=inner,line join=round] (\MAXANGLE:3) ++(\MAXANGLE:1) arc (\MAXANGLE:-\MAXANGLE:1)
		-- ++(0,0) arc (\MAXANGLE:-\MAXANGLE:1)
		-- ++(0,0) arc (\MAXANGLE:-\MAXANGLE:1)
		-- ++(0,0) arc (\MAXANGLE:-\MAXANGLE:1)
		;
		\draw[plot1,very thick,name path=outer] (0,0) ++(\MAXANGLE:4) arc (\MAXANGLE:-\MAXANGLE:4);

		\tikzfillbetween[of=inner and outer]{plot2, opacity=0.2};

		\draw[plot0,dashed] (0,0) ++(\MAXANGLE:{4*cos(\MAXANGLE)}) arc (\MAXANGLE:-\MAXANGLE:{4*cos(\MAXANGLE)});

		\addplot[densely dashed,domain=0:4,samples=2] {tan(63)*x}; %
	\end{axis}
\end{tikzpicture}}
		\hfill
		\subfloat[{Support in polar coordinates\label{fig:illustration-support-n4-radius-angle}}]{\begin{tikzpicture}%
	\begin{axis}[
		betterplot,
		width=.55\linewidth,
		height=.27\textheight,
		xlabel={$\phaseresult_{4}$},
		ylabel={$\radius_4$},
		ylabel shift=-1.5em,
		xmin=-1.4,
		xmax=1.4,
		ymin=.5,
		ymax=4,
		enlarge y limits=.05,
		extra y ticks={.680},
		extra y tick labels={$\radiusmin[4]$},
		]

		\pgfplotstableread{data/results-support-a1.400-n4.dat}\ta
		\addplot+[mark repeat=30,name path=inner] table[x=angleInner,y=radInner] {\ta};
		\addplot+[domain=-1.4:1.4,samples=15,name path=outer] {4};

		\tikzfillbetween[of=inner and outer]{plot2, opacity=0.2};

		\draw[dashed] (1.1,0) -- (1.1,4.2); %
	\end{axis}
\end{tikzpicture}}
		\caption{%
			Support and its boundaries for ${\numsteps=4}$ and~${\maxangle=1.4}$.
			Additionally, a section of the circle with the minimum radius~$\radiusmin$ is shown.
			The black dashed line indicates an angle for which more than one point in~$\supportinner$ exists, since condition~\eqref{eq:result-condition-support-unique} is violated.
			(\autoref{ex:support-non-unique})
		}
		\label{fig:illustration-support-non-unique}
	\end{figure}

	In contrast, an example for which condition~\eqref{eq:result-condition-support-unique} holds, and $\radius$~is a function of~$\phaseresult$ is given in \autoref{fig:illustration-support}.
\end{example}

\subsection{Distributions of Radius~\texorpdfstring{$\radius_{\numsteps}$}{Rₙ} and Angle~\texorpdfstring{$\phaseresult_{\numsteps}$}{θₙ}}

In the following, we provide a scheme to recursively calculate the distributions for more than two steps.
A major difference between the traditional problem with~${\maxangle=\pi}$ and the considered problem is the dependency between radius~$\radius_{\numsteps}$ and~$\phaseresult_{\numsteps}$.
Therefore, when adding a new step, i.e., when going from~$\numsteps$ to ${\numsteps+1}$~steps, we need to take their joint distribution into account.

\begin{thm}\label{thm:distributions-general-n}
	Consider a two-dimensional random walk with $\numsteps$~unit-length steps and uniformly distributed phases in the interval~${[-\maxangle, \maxangle]}$, ${0<\maxangle\leq\frac{\pi}{2}}$.
	\begin{itemize}
		\item The joint density~$\pdf_{\radius_{\numsteps}, \phaseresult_{\numsteps}}$ of the radius~$\radius_{\numsteps}$ and angle~$\phaseresult_{\numsteps}$ is calculated by~\eqref{eq:result-pdf-joint-general-n} {at the bottom of this page}.
		\capstartfalse\begin{table*}[b]
			\hrulefill
			\normalsize
			\begin{equation}\label{eq:result-pdf-joint-general-n}
				\pdf_{\radius_{\numsteps}, \phaseresult_{\numsteps}}(r, \theta) = \frac{r}{2\maxangle} \int_{-\maxangle}^{\maxangle} \frac{\pdf_{\radius_{\numsteps-1}, \phaseresult_{\numsteps-1}}\left(\sqrt{(r\cos\theta - \cos\varphi)^2 + (r\sin\theta - \sin\varphi)^2}, \arctan\frac{r\sin\theta - \sin\varphi}{r\cos\theta - \cos\varphi}\right)}{\sqrt{(r\cos\theta - \cos\varphi)^2 + (r\sin\theta - \sin\varphi)^2}} \diff{\varphi}
			\end{equation}
		\end{table*}\capstarttrue
		\item The \gls{cdf}~$\cdf_{\radius_{\numsteps}}$ of the radius~$\radius_{\numsteps}$ can be calculated by~\eqref{eq:result-cdf-radius-general-n} {at the bottom of this page}, where~$\cdf_{\phase_i}$ is the \gls{cdf} of the uniform distribution from~\eqref{eq:def-cdf-phases}.
		\capstartfalse\begin{table*}[b]
			\normalsize
			\begin{equation}\label{eq:result-cdf-radius-general-n}
				\cdf_{\radius_{\numsteps}}(r) = 1 - \iint\limits_{\support_{\numsteps-1}} \left[\cdf_{\phase_i}\left(t + \arccos\frac{r^2 - x^2 - 1}{2x}\right) - \cdf_{\phase_i}\left(t - \arccos\frac{r^2 - x^2 - 1}{2x}\right)\right] \pdf_{\radius_{\numsteps-1}, \phaseresult_{\numsteps-1}}(x, t) \diff{x}\diff{t}
			\end{equation}
		\end{table*}\capstarttrue
	\end{itemize}
	Furthermore, the distribution of the resulting angle~$\phaseresult_{\numsteps}$ can be approximated as follows.
	\begin{itemize}
		\item The \gls{cdf}~$\cdf_{\phaseresult_{\numsteps}}$ of the angle~$\phaseresult_{\numsteps}$ can be approximately calculated by
		\begin{equation}\label{eq:result-cdf-phase-general-n-approx}
			\cdf_{\phaseresult_{\numsteps}}(\theta) \approx \iint\limits_{\support_{\numsteps-1}} \cdf_{\phase_i}\big(\theta(1+x) - xt\big) \pdf_{\radius_{\numsteps-1}, \phaseresult_{\numsteps-1}}(x, t) \diff{x}\diff{t}
		\end{equation}
		The corresponding density~$\pdf_{\phaseresult_{\numsteps}}$ is calculated as
		\begin{equation}\label{eq:result-pdf-phase-general-n-approx}
			\hspace{-\leftmargin}\pdf_{\phaseresult_{\numsteps}}(\theta) \approx \!\iint\limits_{\support_{\numsteps-1}} (1+x)\pdf_{\phase_i}\big(\theta(1+x) - xt\big) \pdf_{\radius_{\numsteps-1}, \phaseresult_{\numsteps-1}}(x, t) \diff{x}\diff{t}
		\end{equation}
	\end{itemize}
\end{thm}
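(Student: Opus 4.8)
The plan is to derive all four formulas from the single recursion $\Z_{\numsteps}=\Z_{\numsteps-1}+\e^{\imag\phase_{\numsteps}}$, with $\phase_{\numsteps}\sim\unif[-\maxangle,\maxangle]$ independent of $\Z_{\numsteps-1}$: condition on the state $(\radius_{\numsteps-1},\phaseresult_{\numsteps-1})=(x,t)$ reached after $\numsteps-1$ steps, work out the corresponding conditional law of $\Z_{\numsteps}$ in closed form, and integrate against $\pdf_{\radius_{\numsteps-1},\phaseresult_{\numsteps-1}}$ over the support $\support_{\numsteps-1}$ characterized in \autoref{thm:support-bound-characterization}. Since adding a step couples radius and angle (unlike the classical $\maxangle=\pi$ case), the recursion has to carry the \emph{joint} density forward, which is why the joint claim is established first and the marginals are read off from it.

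For the joint density~\eqref{eq:result-pdf-joint-general-n} I would first write the recursion in Cartesian coordinates, where it is a convolution against the step density, $\pdf_{\Z_{\numsteps}}(u,v)=\frac{1}{2\maxangle}\int_{-\maxangle}^{\maxangle}\pdf_{\Z_{\numsteps-1}}(u-\cos\varphi,\,v-\sin\varphi)\diff\varphi$, and then convert both sides to polar coordinates. With $\pdf_{\radius_{\numsteps},\phaseresult_{\numsteps}}(r,\theta)=r\,\pdf_{\Z_{\numsteps}}(r\cos\theta,r\sin\theta)$, and writing the argument of $\pdf_{\Z_{\numsteps-1}}$ at $u=r\cos\theta$, $v=r\sin\theta$ as $(\xi,\eta)=(r\cos\theta-\cos\varphi,\,r\sin\theta-\sin\varphi)$ so that $\pdf_{\Z_{\numsteps-1}}(\xi,\eta)=\pdf_{\radius_{\numsteps-1},\phaseresult_{\numsteps-1}}\big(\sqrt{\xi^2+\eta^2},\arctan(\eta/\xi)\big)/\sqrt{\xi^2+\eta^2}$, one obtains exactly~\eqref{eq:result-pdf-joint-general-n}; the prefactor $r$ and the inner $1/\sqrt{\cdot}$ are precisely the two polar-coordinate Jacobians. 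That $\pdf_{\radius_{\numsteps-1},\phaseresult_{\numsteps-1}}$ vanishes off $\support_{\numsteps-1}$ restricts the effective range of $\varphi$ automatically, and because $\maxangle\leq\frac{\pi}{2}$ the walk stays in the closed right half-plane, so the principal branch of $\arctan$ is the correct one.

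For the radius \gls{cdf}~\eqref{eq:result-cdf-radius-general-n}, conditioning on $(\radius_{\numsteps-1},\phaseresult_{\numsteps-1})=(x,t)$ gives $\radius_{\numsteps}^2=x^2+1+2x\cos(\phase_{\numsteps}-t)$, so $\{\radius_{\numsteps}\leq r\}$ is the event $\cos(\phase_{\numsteps}-t)\leq\frac{r^2-x^2-1}{2x}$, the same type of cosine level-set event as in \autoref{sub:n2-conditional-r-g-t}. Its complement is $\phase_{\numsteps}\in\big(t-\arccos\frac{r^2-x^2-1}{2x},\,t+\arccos\frac{r^2-x^2-1}{2x}\big)$, so the conditional probability is $1-\big[\cdf_{\phase_i}\big(t+\arccos(\cdot)\big)-\cdf_{\phase_i}\big(t-\arccos(\cdot)\big)\big]$ with $\cdf_{\phase_i}$ read as the uniform \gls{cdf} clamped to $[0,1]$ and $\arccos$ extended at the endpoints when $\frac{r^2-x^2-1}{2x}\notin[-1,1]$ (the surely/never cases); integrating over $\support_{\numsteps-1}$ and using that $\pdf_{\radius_{\numsteps-1},\phaseresult_{\numsteps-1}}$ integrates to one gives~\eqref{eq:result-cdf-radius-general-n}.

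The angle claim is the only place an approximation enters and the step I expect to cost the most effort. Factoring $\Z_{\numsteps}=\e^{\imag t}\big(x+\e^{\imag(\phase_{\numsteps}-t)}\big)$ gives the exact identity $\phaseresult_{\numsteps}=t+\arctan\frac{\sin(\phase_{\numsteps}-t)}{x+\cos(\phase_{\numsteps}-t)}$, equivalently $\{\phaseresult_{\numsteps}\leq\theta\}=\{\sin(\phase_{\numsteps}-\theta)\leq x\sin(\theta-t)\}$; unlike the isosceles case $\numsteps=2$ (where the half-angle theorem made this linear), an exact evaluation would require $\arcsin$ together with extra clamping and is cumbersome, so I would linearize for small angles, $\arctan\frac{\sin(\phase_{\numsteps}-t)}{x+\cos(\phase_{\numsteps}-t)}\approx\frac{\phase_{\numsteps}-t}{1+x}$, i.e.\ $\phaseresult_{\numsteps}\approx\frac{xt+\phase_{\numsteps}}{1+x}$. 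Then $\{\phaseresult_{\numsteps}\leq\theta\}$ becomes $\{\phase_{\numsteps}\leq\theta(1+x)-xt\}$ with conditional probability $\cdf_{\phase_i}\big(\theta(1+x)-xt\big)$; integrating over $\support_{\numsteps-1}$ gives~\eqref{eq:result-cdf-phase-general-n-approx}, and differentiating in $\theta$ under the integral, the chain rule producing the factor $1+x$, gives~\eqref{eq:result-pdf-phase-general-n-approx}. The main obstacle is precisely that this last step is an approximation rather than an identity: it is accurate when $\maxangle$ is small so that all intermediate angles stay near zero, and I would state that scope explicitly and back it with the numerical comparisons rather than overclaim.
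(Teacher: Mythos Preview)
Your proposal is correct and follows essentially the same route as the paper: the joint density via the Cartesian convolution and polar Jacobians, the radius \gls{cdf} via the cosine level-set argument plus the law of total probability, and the angle via a first-order linearization yielding $\phaseresult_{\numsteps}\approx\frac{x t+\phase_{\numsteps}}{1+x}$. The only cosmetic difference is that the paper phrases the linearization as a Taylor expansion of the implicit function $\phaseresult_{\numsteps}-\arctan(\cdot)=0$ (citing~\cite{Koepf1994taylor}), whereas you expand $\arctan\frac{\sin(\phase_{\numsteps}-t)}{x+\cos(\phase_{\numsteps}-t)}$ directly around $\phase_{\numsteps}=t$; both give the identical approximation.
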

\begin{proof}
First, we derive the joint density~$\pdf_{\radius_{\numsteps}, \phaseresult_{\numsteps}}$ by a transformation of random variables.
Additionally, we start with Cartesian coordinates, i.e., the real and imaginary parts~$\X_{\numsteps}$ and~$\Y_{\numsteps}$, respectively.
For this, we use the following transformation~$g$
\begin{equation*}
	\begin{pmatrix}
		\X_{\numsteps}\\
		\Y_{\numsteps}\\
		\phase_{\numsteps}
	\end{pmatrix}
	= g\left(\!
	\begin{pmatrix}
		\radius_{\numsteps-1}\\
		\phaseresult_{\numsteps-1}\\
		\phase_{\numsteps}
	\end{pmatrix}\!
	\right)%
	\!=\!
	\begin{pmatrix}
		\radius_{\numsteps-1}\cos\phaseresult_{\numsteps-1} + \cos\phase_{\numsteps}\\
		\radius_{\numsteps-1}\sin\phaseresult_{\numsteps-1} + \sin\phase_{\numsteps}\\
		\phase_{\numsteps}
	\end{pmatrix}
	\!,
\end{equation*}
where we introduce~$\phase_{\numsteps}$ as an additional helper coordinate to obtain a square Jacobian matrix.
With the Jacobian determinant of~$\inv{g}$, the joint distribution of~$\transpose{(\X_{\numsteps}, \Y_{\numsteps}, \phase_{\numsteps})}$ can be obtained as
\begin{align*}
	\pdf_{\X_{\numsteps}, \Y_{\numsteps}, \phase_{\numsteps}}(x, y, \varphi)
	&= \frac{\pdf_{\radius_{\numsteps-1}, \phaseresult_{\numsteps-1}, \phase_{\numsteps}}\left(\inv{g}(x, y, \varphi)\right)}{\sqrt{(x - \cos\varphi)^2 + (y - \sin\varphi)^2}}\,.
\end{align*}
Next, we use the fact that the angle~$\phase_{\numsteps}$ is independent of the previous steps, i.e., ${\pdf_{\radius_{\numsteps-1}, \phaseresult_{\numsteps-1}, \phase_{\numsteps}}=\pdf_{\radius_{\numsteps-1}, \phaseresult_{\numsteps-1}} \pdf_{\phase_{\numsteps}}}$, and marginalize over~$\phase_{\numsteps}$ to obtain the joint \gls{pdf} of the real and imaginary components after $\numsteps$~steps.
Finally, we transform the coordinates back from the Cartesian coordinates~${(\X_{\numsteps}, \Y_{\numsteps})}$ to the polar coordinates~${(\radius_{\numsteps}, \phaseresult_{\numsteps})}$ to obtain~\eqref{eq:result-pdf-joint-general-n}.

Next, we use similar steps for calculating the marginal distribution of the radius as in the two-step case in \autoref{sub:n2-conditional-r-g-t}.
In particular, we get
\begin{align*}
	\cdf_{\radius_{\numsteps}}(r)
	&=\Pr\left(\radius_{\numsteps}\leq r\right)\\
	&= \Pr\left(\cos\left(\phase_{\numsteps}-\phaseresult_{\numsteps-1}\right) \leq \frac{r^2 - \radius_{\numsteps-1}^2 - 1}{2\radius_{\numsteps-1}}\right)\\
	\begin{split}
	&= 1 - \bigg[\cdf_{\phase_{\numsteps}}\left(\phaseresult_{\numsteps-1} + \arccos\frac{r^2 - \radius_{\numsteps-1}^2 - 1}{2\radius_{\numsteps-1}}\right)\\& \mspace{40mu plus 1fill} - \cdf_{\phase_{\numsteps}}\left(\phaseresult_{\numsteps-1} - \arccos\frac{r^2 - \radius_{\numsteps-1}^2 - 1}{2\radius_{\numsteps-1}}\right)\bigg]
	\end{split}
\end{align*}
which combined with the law of total probability yields~\eqref{eq:result-cdf-radius-general-n}.

Lastly, we derive the approximation of the \glspl{cdf} for the angle~$\phaseresult_{\numsteps}$.
We base this on the following approximation %
\begin{equation}\label{eq:approx-phase-general-n}
	\phase_{\numsteps} \approx \phaseresult_{\numsteps} - \radius_{\numsteps-1}\left(\phaseresult_{\numsteps-1} - \phaseresult_{\numsteps}\right)
	\,,
\end{equation}
which is derived using a linearization of the implicit function
\begin{equation*}
	\phaseresult_{\numsteps} - \arctan\frac{\radius_{\numsteps-1}\sin\phaseresult_{\numsteps-1} + \sin\phase_{\numsteps}}{\radius_{\numsteps-1}\cos\phaseresult_{\numsteps-1} + \cos\phase_{\numsteps}} = 0
\end{equation*}
based on the first term of the Taylor series~\cite{Koepf1994taylor}.
This provides the approximate expression of the marginal distribution
\begin{align*}
	\cdf_{\phaseresult_{\numsteps}}(t)
	&= \Pr\left(\phaseresult_{\numsteps} \leq \theta\right)\\
	&\approx \Pr\left(\frac{\phase_{\numsteps}+\radius_{\numsteps-1}\phaseresult_{\numsteps-1}}{1+\radius_{\numsteps-1}} \leq \theta\right)\\
	&= \Pr\big(\phase_{\numsteps} \leq \theta(1+\radius_{\numsteps-1}) - \radius_{\numsteps-1}\phaseresult_{\numsteps-1}\big)\,,
\end{align*}
which combined with the law of total probability yields~\eqref{eq:result-cdf-phase-general-n-approx}.
The corresponding \gls{pdf} then results from differentiating with respect to~$\theta$.
\end{proof}

\begin{example}[{Three Steps~${\numsteps=3}$}]\label{ex:n3}
	\begin{figure}
		\centering
		\subfloat[{\Gls{cdf} of the radius~$\radius_{3}$\label{fig:cdf-radius-n3}}]{\begin{tikzpicture}%
	\begin{axis}[
		betterplot,
		height=.22\textheight,
		trig format plots=rad,
		xlabel={Radius~$\radius_3$},
		ylabel={\Gls{cdf} $\cdf_{\radius_3}$},
		xmin=2.65,
		xmax=3,
		ymin=0,
		ymax=1,
		legend pos=north west,
		]
		\pgfplotstableread{data/results-n3-a0.500.dat}\tnum
		\pgfplotstableread{data/results-mc-n3-a0.500.dat}\tmc

		\addplot+[mark repeat=30] table[x=radius,y=cdfRadius] {\tnum};
		\addlegendentry{Numeric.}

		\addplot+[mark repeat=30] table[x=radius,y=cdfRadiusApprox] {\tnum};
		\addlegendentry{Approx.}

		\addplot+[mark repeat=3,const plot mark mid] table[x=radius,y=histRad] {\tmc};
		\addlegendentry{MC}
	\end{axis}
\end{tikzpicture}}

		\subfloat[{\Gls{pdf} of the angle~$\phaseresult_{3}$\label{fig:pdf-angle-n3}}]{\begin{tikzpicture}%
	\begin{axis}[
		betterplot,
		height=.225\textheight,
		xlabel={Angle~$\phaseresult_3$},
		ylabel={\Gls{pdf} $\pdf_{\phaseresult_3}$},
		xmin=-.5,
		xmax=.5,
		ymin=0,
		ymax=2.5,
		]
		\pgfplotstableread{data/results-n3-a0.500.dat}\tnum
		\pgfplotstableread{data/results-mc-n3-a0.500.dat}\tmc

		\addplot+[mark repeat=30] table[x=angle,y=pdfAngle] {\tnum};
		\addlegendentry{Numeric.}

		\addplot+[mark repeat=30] table[x=angle,y=pdfAngleApprox] {\tnum};
		\addlegendentry{Approx.}

		\addplot+[mark repeat=3,const plot mark right] table[x=angle,y=histAngle] {\tmc};
		\addlegendentry{MC}
	\end{axis}
\end{tikzpicture}}

		\subfloat[{Joint \gls{pdf} of the radius and angle. Additionally, the boundary of the support from~\eqref{eq:result-support-inner} is shown.\label{fig:pdf-joint-n3}}]{\begin{tikzpicture}
	\begin{axis}[
		width=.8\linewidth,
		height=.225\textheight,
		trig format plots=rad,
		view={0}{90},
		xmin=-.5,
		xmax=.5,
		ymax=3,
		colormap name=viridis,
		colorbar style={xtick=data},
		ylabel={Radius~$\radius_3$},
		xlabel={Angle~$\phaseresult_3$},
		axis on top,
		]
		\pgfplotstableread{data/results-joint-n3-a0.500.dat}\tbl
		\pgfplotstableread{data/results-support-n3-a0.500.dat}\tsupp

		\addplot3[surf,faceted color=none] table [x=angle,y=radius,z=pdfJoint]{\tbl};
		\addplot[plot1,thick] table[x=angle,y=radius] {\tsupp};
	\end{axis}
\end{tikzpicture}}
		\caption{%
			Distributions of a random walk with ${\numsteps=3}$~steps and maximum angle~${\maxangle=0.5}$.
			Besides the numerical solution from \autoref{thm:distributions-general-n}, the approximation for large~$\numsteps$ from \autoref{sec:large-n}, and a histogram obtained though \gls{mc} simulations with $10^7$~samples is shown.
			(\autoref{ex:n3})
		}
		\label{fig:results-n3}
	\end{figure}

	The results of \autoref{thm:distributions-general-n} are illustrated with the example of a three-step random walk with maximum angle~${\maxangle=0.5}$, for which all distributions are shown in \autoref{fig:results-n3}.
	Besides the numerically calculated distributions from \autoref{thm:distributions-general-n}, we show a histogram of the marginals obtained through \gls{mc} simulations with $10^7$~samples for comparison.
	It can be seen that the derived result closely matches the simulated distribution.
	As an additional comparison, we plot the approximation for large~$\numsteps$, which will be derived in the following section.
	However, since we only consider three steps, this approximation shows significant deviations from the actual distribution, especially for the radius shown in \autoref{fig:cdf-radius-n3}.
	For the joint distribution in \autoref{fig:pdf-joint-n3}, we additionally show the boundary of the support as stated in \autoref{thm:support-bound-characterization}.
\end{example}

\begin{rem}
	While the marginal distributions of the angle~$\phaseresult_{\numsteps}$ can also be calculated by marginalization of the joint distribution, it has been found in our numerical experiments that the approximations~\eqref{eq:result-cdf-phase-general-n-approx} and~\eqref{eq:result-pdf-phase-general-n-approx} can be computed significantly faster.
\end{rem}

\begin{rem}[Computational Complexity]
	In general, the complexity of the recursive computation depends on the implementation of the integral.
	However, if we assume a basic quadrature implementation with $K$~points, the complexity will be exponential in the number of steps~$\numsteps$, since for each~${n=3, \dots{}, \numsteps}$, the PDF of ${n-1}$~steps needs to be computed $K$~times.
	Therefore, the total complexity in this case is~$\mathcal{O}(K^{\numsteps-2})$, which corresponds to an exponential complexity.
	While the complexity is exponential, it can still be feasible to use the recursive formulation for small~$N$ using optimized implementation and hardware.
	For large~$N$, an efficiently computable approximation is presented in the following section.
\end{rem}
\section{Approximation for Large \texorpdfstring{$\numsteps$}{N}}\label{sec:large-n}

\begin{thm}\label{thm:distributions-large-n}
	Consider a two-dimensional random walk with $\numsteps$~unit-length steps and uniformly distributed phases in the interval~${[-\maxangle, \maxangle]}$, ${0<\maxangle\leq\frac{\pi}{2}}$.
	The distributions of the resulting radius~$\radius_{\numsteps}$ and resulting angle~$\phaseresult_{\numsteps}$ can be approximated for large~$\numsteps$ as follows.
	\begin{itemize}
		\item The \gls{cdf}~$\cdf_{\radius_{\numsteps}}$ of the radius~$\radius_{\numsteps}$ is approximated by
		\begin{equation}\label{eq:result-cdf-radius-large-n}
			\cdf_{\radius_{\numsteps}}(r) = \cdf_{\radius^2_{\numsteps}}(r^2) \sim \gxsdist(\vec{w}, \vec{k}, \vec{\lambda}, s, m)
		\end{equation}
		where $\radius^2_{\numsteps}$ is distributed according to a generalized chi-square distribution~${\gxsdist(\vec{w}, \vec{k}, \vec{\lambda}, s, m)}$~\cite{Das2025} with parameters~${\vec{w}=(\numsteps\var_x, \numsteps\var_y)}$, ${\vec{k}=(1, 1)}$, ${\vec{\lambda}=(\frac{\numsteps\mean_x^2}{\var_x}, 0)}$, ${s=0}$, and ${m=0}$.
		Furthermore, the parameters~$\mean_x$, $\var_x$, and $\var_y$ are given as \vspace{-1.5ex}
		\begin{align}
			\mean_x &= \frac{\sin\maxangle}{\maxangle} \label{eq:result-mean-x}\\
			\var_x &= \frac{\maxangle+\cos\maxangle \sin\maxangle}{2\maxangle} - \left(\frac{\sin\maxangle}{\maxangle}\right)^2 \label{eq:result-var-x}\\
			\var_y &= \frac{\maxangle-\cos\maxangle \sin\maxangle}{2\maxangle} \label{eq:result-var-y}\,.
		\end{align}
		The corresponding density~$\pdf_{\radius_{\numsteps}}$ is given as
		\begin{equation}\label{eq:result-pdf-radius-large-n}
			\pdf_{\radius_{\numsteps}}(r) = 2 r \pdf_{\radius^2_{\numsteps}}(r^2)\,.
		\end{equation}
		\item The \gls{cdf}~$\cdf_{\phaseresult_{\numsteps}}$ of the angle~$\phaseresult_{\numsteps}$ is approximated by
		\begin{equation}\label{eq:result-cdf-phase-large-n}
			\cdf_{\phaseresult_{\numsteps}}(\theta) = \cdfnormal\left(\frac{\numsteps\mean_x \tan \theta}{\sqrt{\numsteps\left(\tan^2(\theta) \var_x+ \var_y\right)}}\right)\,,
		\end{equation}
		where $\cdfnormal$~represents the \gls{cdf} of the standard normal distribution and the parameters~$\mean_x$, $\var_x$, and $\var_y$ are given in~\eqref{eq:result-mean-x}, \eqref{eq:result-var-x}, and~\eqref{eq:result-var-y}, respectively.
		The corresponding density~$\pdf_{\phaseresult_{\numsteps}}$ is given in~\eqref{eq:result-pdf-phase-large-n} {at the top of this page},
		\capstartfalse\begin{table*}%
			\normalsize
			\begin{equation}\label{eq:result-pdf-phase-large-n}
				\pdf_{\phaseresult_{\numsteps}}(\theta) = \pdfnormal\left(\frac{\numsteps\mean_x \tan \theta}{\sqrt{\numsteps\left(\tan^2(\theta) \var_x+ \var_y\right)}}\right) \frac{\sqrt{\numsteps}\mean_x \var_y}{\left(\tan^2(\theta) \var_x+ \var_y\right)^{\frac{3}{2}}\cos^2(\theta)}%
			\end{equation}
			\hrulefill
		\end{table*}\capstarttrue
		in which $\pdfnormal$~represents the \gls{pdf} of the standard normal distribution.
		\item The joint density~$\pdf_{\radius_{\numsteps}, \phaseresult_{\numsteps}}$ of the radius~$\radius_{\numsteps}$ and angle~$\phaseresult_{\numsteps}$ is approximated by
		\begin{equation}\label{eq:result-pdf-joint-large-n}\hspace*{-\leftmargin}
			\pdf_{\radius_{\numsteps}, \phaseresult_{\numsteps}}(r, \theta)%
			= \frac{r}{\numsteps^2\std_x\std_y}\pdfnormal\left(\frac{r\cos\theta - \numsteps\mean_x}{\numsteps\std_x}\right) \pdfnormal\left(\frac{r\sin \theta}{\numsteps\std_y}\right).
		\end{equation}
	\end{itemize}
\end{thm}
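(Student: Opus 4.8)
The plan is to reduce the whole statement to a bivariate central limit theorem applied to the Cartesian components of the walk. Writing $\X_{\numsteps}=\sum_{i=1}^{\numsteps}\cos\phase_i$ and $\Y_{\numsteps}=\sum_{i=1}^{\numsteps}\sin\phase_i$, the vectors $(\cos\phase_i,\sin\phase_i)$ are i.i.d., so for large~$\numsteps$ the multivariate CLT makes $(\X_{\numsteps},\Y_{\numsteps})$ approximately bivariate normal with mean $(\numsteps\mean_x,\numsteps\mean_y)$ and covariance $\numsteps$ times the covariance matrix of $(\cos\phase_1,\sin\phase_1)$. The first step is then to evaluate the relevant moments under $\phase_1\sim\unif[-\maxangle,\maxangle]$ by elementary integration: this gives $\mean_x=\expect{\cos\phase_1}=\sin\maxangle/\maxangle$, $\mean_y=\expect{\sin\phase_1}=0$ by symmetry, the variances $\var_x=\expect{\cos^2\phase_1}-\mean_x^2$ and $\var_y=\expect{\sin^2\phase_1}$ in the forms~\eqref{eq:result-mean-x}--\eqref{eq:result-var-y}, and the crucial fact $\cov(\cos\phase_1,\sin\phase_1)=\tfrac12\expect{\sin 2\phase_1}=0$. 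Hence $\X_{\numsteps}$ and $\Y_{\numsteps}$ are asymptotically independent Gaussians, $\X_{\numsteps}\approx\normaldist(\numsteps\mean_x,\numsteps\var_x)$ and $\Y_{\numsteps}\approx\normaldist(0,\numsteps\var_y)$.

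From here the joint density~\eqref{eq:result-pdf-joint-large-n} drops out immediately by the Cartesian-to-polar change of variables $(\X_{\numsteps},\Y_{\numsteps})=(\radius_{\numsteps}\cos\phaseresult_{\numsteps},\radius_{\numsteps}\sin\phaseresult_{\numsteps})$, whose Jacobian is~$r$, so that $\pdf_{\radius_{\numsteps},\phaseresult_{\numsteps}}(r,\theta)=r\,\pdf_{\X_{\numsteps},\Y_{\numsteps}}(r\cos\theta,r\sin\theta)$ with the two one-dimensional Gaussian densities factoring. For the radius I would work with $\radius_{\numsteps}^2=\X_{\numsteps}^2+\Y_{\numsteps}^2$ and write $\X_{\numsteps}^2=\numsteps\var_x\,(Z_1+\sqrt{\numsteps}\,\mean_x/\std_x)^2$ and $\Y_{\numsteps}^2=\numsteps\var_y\,Z_2^2$ with $Z_1,Z_2$ independent standard normals; then $\X_{\numsteps}^2$ is $\numsteps\var_x$ times a noncentral $\chi^2$ with one degree of freedom and noncentrality $\numsteps\mean_x^2/\var_x$, while $\Y_{\numsteps}^2$ is $\numsteps\var_y$ times a central $\chi^2$ with one degree of freedom, so $\radius_{\numsteps}^2$ is by definition the generalized chi-square with exactly the parameters listed in~\eqref{eq:result-cdf-radius-large-n} (with no linear or constant part, $s=m=0$), and~\eqref{eq:result-pdf-radius-large-n} is just the density under the substitution $t\mapsto\sqrt{t}$.

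For the angle I would use that $\mean_x>0$ for every $\maxangle\le\tfrac{\pi}{2}$, so $\X_{\numsteps}$ is positive with probability tending to one; on that event $\phaseresult_{\numsteps}=\arctan(\Y_{\numsteps}/\X_{\numsteps})$ and, for $\theta\in(-\tfrac{\pi}{2},\tfrac{\pi}{2})$, the event $\{\phaseresult_{\numsteps}\le\theta\}$ equals $\{\Y_{\numsteps}-\tan\theta\,\X_{\numsteps}\le0\}$. Since $\Y_{\numsteps}-\tan\theta\,\X_{\numsteps}$ is a linear combination of independent Gaussians it is itself Gaussian with mean $-\numsteps\mean_x\tan\theta$ and variance $\numsteps(\tan^2\theta\,\var_x+\var_y)$, which yields~\eqref{eq:result-cdf-phase-large-n}; differentiating in~$\theta$ (using that the derivative of $\tan\theta$ is $1/\cos^2\theta$ and simplifying the resulting quotient) gives~\eqref{eq:result-pdf-phase-large-n}.

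The moment integrals, the Jacobian, and the final differentiation are all routine, so the main obstacle is not computational: the central limit theorem only delivers convergence in distribution, so the word ``approximated'' must be read in that asymptotic sense, and the one passage that still needs an honest argument is the reduction $\{\phaseresult_{\numsteps}\le\theta\}=\{\Y_{\numsteps}\le\tan\theta\,\X_{\numsteps}\}$. This rests on $\X_{\numsteps}>0$ and is justified because $\mean_x$ is bounded away from zero on $(0,\tfrac{\pi}{2}]$, so $\Pr(\X_{\numsteps}\le0)$ vanishes (exponentially) as $\numsteps\to\infty$; a secondary point worth checking is that the zero covariance obtained in the first step is precisely what upgrades ``uncorrelated'' to ``independent'' for the limiting Gaussian, which is what makes all three factorizations above valid.
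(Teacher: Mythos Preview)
Your proposal is correct and follows essentially the same route as the paper: apply the bivariate CLT to the Cartesian components, evaluate the moments to obtain asymptotically independent Gaussians, then read off the joint density via the polar Jacobian, the radius via the weighted noncentral chi-square decomposition of $\radius_{\numsteps}^2$, and the angle via the positivity of the real part. The one small difference is in the angle step: the paper obtains~\eqref{eq:result-cdf-phase-large-n} by invoking Hinkley's approximation for the ratio $\avg{\Y}/\avg{\X}$ of two normals and then transforming by $\tan$, whereas you derive the same formula more directly by observing that $\Y_{\numsteps}-\tan\theta\,\X_{\numsteps}$ is itself Gaussian --- the two arguments are equivalent, and yours is slightly more self-contained.
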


\begin{proof}
	In the remainder of this section, we prove \autoref{thm:distributions-large-n}.
	The basis of the proof is the application of the \gls{clt} to the real and imaginary part of the result after $\numsteps$~steps.
	With this, we are then able to approximate the distributions of~$\radius_{\numsteps}$ and~$\phaseresult_{\numsteps}$, as well as their joint distribution.
\end{proof}

\subsection{Application of the Central Limit Theorem}\label{sub:large-n-clt}

As a first step of the proof, we apply the \gls{clt} to the real and imaginary components of the complex number, which represents the result~$\Z_{\numsteps}$ of the random walk after $\numsteps$~steps.
They are obtained through summation of the real and imaginary components of the individual steps~$i$, which are given as
\begin{equation}\label{eq:def-x-y-components}
	\X_i = \cos\phase_i \quad\text{and}\quad \Y_i = \sin\phase_i\,,
\end{equation}
respectively.
Furthermore, their sample averages are
\begin{equation}\label{eq:def-x-y-sample-averages}
	\avg{\X} = \frac{1}{\numsteps} \sum_{i=1}^{\numsteps} \X_i \quad\text{and}\quad \avg{\Y} = \frac{1}{\numsteps} \sum_{i=1}^{\numsteps} \Y_i\,.
\end{equation}
By the assumptions on the random walk, all~$\X_i$ are \gls{iid}, as each individual step is taken independently and with the same distribution as the others.
Analogously, the same holds for~$\Y_i$, even if~$\X_i$ and~$\Y_i$ are not independent.
Therefore, applying the \gls{clt} to the random vector~${(\X_i, \Y_i)}$ yields
\begin{equation}\label{eq:clt-x-y-components}
	\sqrt{\numsteps}\left(\begin{pmatrix}\avg{\X}\\\avg{\Y}\end{pmatrix}-\begin{pmatrix}\mean_x\\\mean_y\end{pmatrix}\right) \xrightarrow{d} \normaldist\left(0, \Sigma\right)
\end{equation}
with expected values~${\mean_x=\expect{\X_i}}$ and~${\mean_y=\expect{\Y_i}}$, variances~${\var_x=\expect{(\X_i-\mean_x)^2}}$, ${\var_y=\expect{(\Y_i-\mean_y)^2}}$, and covariance matrix
\begin{equation*}
	\Sigma = \begin{pmatrix}
		\var_x & \cov(\X, \Y)\\
		\cov(\Y, \X) & \var_y
	\end{pmatrix}\,.
\end{equation*}

For the uniform distribution of~$\phase_i$ in~\eqref{eq:def-pdf-phases}, we obtain the following values for the mean values
\begin{align}
	\mean_x = \expect{\X_i} &= \int_{-\maxangle}^{\maxangle}\frac{1}{2\maxangle}\cos t \diff{t} = \frac{\sin\maxangle}{\maxangle}\\
	\mean_y = \expect{\Y_i} &= \int_{-\maxangle}^{\maxangle}\frac{1}{2\maxangle}\sin t \diff{t} = 0\,,
\end{align}
for the variances
\begin{align}
	\var_x = \expect{(\X_i-\mean_x)^2}
	&= \frac{\maxangle+\cos\maxangle \sin\maxangle}{2\maxangle} - \left(\frac{\sin\maxangle}{\maxangle}\right)^2\\
	\var_y = \expect{(\Y_i-\mean_y)^2}
	&= \frac{\maxangle-\cos\maxangle \sin\maxangle}{2\maxangle}\,,
\end{align}
and the covariance
\begin{equation}
	\cov(\X, \Y) = \frac{1}{2\maxangle}\int_{-\maxangle}^{\maxangle}\left(\cos t - \frac{\sin\maxangle}{\maxangle}\right) \sin t \diff{t} = 0\,.
\end{equation}
This result shows that the sample averages of the real and imaginary components become jointly normal and independent as~$\numsteps$ grows.
\subsection{Approximate Distribution of Radius~\texorpdfstring{$\radius_{\numsteps}$}{Rₙ}}

First, we use the application of the \gls{clt} to approximate the distribution of the radius~$\radius_{\numsteps}$ after a large number of steps, i.e.,~${\numsteps\gg 1}$.

For this, we express the radius in terms of~$\avg{\X}$ and~$\avg{\Y}$ as
\begin{align}
	\radius_{\numsteps}^2
	&= (\numsteps\avg{\X})^2 + (\numsteps\avg{\Y})^2 \label{eq:radius-large-n-avg-xy}\\
	&= \numsteps\var_x\left(\rv{V}+\frac{\sqrt{\numsteps}\mean_x}{\std_x}\right)^2 +  \numsteps\var_y\tilde{\rv{V}}^2\label{eq:radius-large-n-sum-ncx2}\,,
\end{align}
where $\rv{V}$~and~$\tilde{\rv{V}}$ are independent standard normally distributed random variables.
Based on~\eqref{eq:radius-large-n-sum-ncx2}, $\radius_{\numsteps}^2$ is given as the weighted sum of two non-central chi-square distributed random variables with one degree of freedom and non-centrality parameters~$\frac{\numsteps\mean_x^2}{\var_x}$ and~$0$, respectively.
Thus, $\radius_{\numsteps}^2$ is distributed according to a generalized chi-square distribution~${\gxsdist(\vec{w}, \vec{k}, \vec{\lambda}, s, m)}$~\cite{Das2025} with parameters~${\vec{w}=(\numsteps\var_x, \numsteps\var_y)}$, ${\vec{k}=(1, 1)}$, ${\vec{\lambda}=(\frac{\numsteps\mean_x^2}{\var_x}, 0)}$, ${s=0}$, and ${m=0}$.
While this distribution does not admit a closed-form expression, it can be efficiently computed numerically~\cite{Imhof1961,Ruben1962,Das2025}.
A Python implementation is made freely available at~\cite{BesserGithub}.

To obtain the distribution of~$\radius_{\numsteps}$ from~$\radius_{\numsteps}^2$, we simply apply a transformation of variables, which yields
\begin{align*}
	\cdf_{\radius_{\numsteps}}(r) &= \cdf_{\radius_{\numsteps}^2}(r^2)\\
	\pdf_{\radius_{\numsteps}}(r) &= 2r\pdf_{\radius_{\numsteps}^2}(r^2)
\end{align*}
for the \gls{cdf} and \gls{pdf}, respectively.

\begin{example}[{Approximate Distribution of Radius~$\radius_{\numsteps}$}]\label{ex:radius-n-large}
	As a numerical illustration, we plot the approximate \gls{cdf} of the radius~$\radius_{\numsteps}$ from~\eqref{eq:result-cdf-radius-large-n} together with a numerical \gls{cdf} obtained through \gls{mc} simulations with $10^7$~samples in \autoref{fig:cdf-radius-large-n}.
	We show the results for a small number of steps, namely~${\numsteps=5}$, in \autoref{fig:cdf-radius-n5}.
	As expected, the approximation shows significant deviations from the numerical results, as the number of steps is too small.
	In contrast, for the larger~${\numsteps=30}$ in \autoref{fig:cdf-radius-n30}, the proposed approximation for large~$\numsteps$ is valid.

	\begin{figure}
		\centering
		\subfloat[{$\numsteps=5$\label{fig:cdf-radius-n5}}]{\begin{tikzpicture}%
	\begin{axis}[
		betterplot,
		height=.225\textheight,
		trig format plots=rad,
		xlabel={Radius~$\radius_5$},
		ylabel={\Gls{cdf} $\cdf_{\radius_5}$},
		xmin=2,
		xmax=5,
		ymin=0,
		ymax=1,
		legend pos=north west,
		]
		\pgfplotstableread[x=time]{data/results-approx-n-large-a0.500-n5.dat}\ta
		\pgfplotstableread[x=time]{data/results-mc-n-large-a0.500-n5.dat}\tam

		\pgfplotstableread[x=time]{data/results-approx-n-large-a1.250-n5.dat}\tb
		\pgfplotstableread[x=time]{data/results-mc-n-large-a1.250-n5.dat}\tbm

		\addplot+[mark repeat=30] table[x=radius,y=cdfRad] {\ta};
		\addlegendentry{Approx.\,--\,$\maxangle=0.5$}

		\addplot+[mark repeat=3,const plot mark mid] table[x=radius,y=histRad] {\tam};
		\addlegendentry{MC\,--\,$\maxangle=0.5$}

		\addplot+[mark repeat=30] table[x=radius,y=cdfRad] {\tb};
		\addlegendentry{Approx.\,--\,$\maxangle=1.25$}

		\addplot+[mark repeat=3,const plot mark mid] table[x=radius,y=histRad] {\tbm};
		\addlegendentry{MC\,--\,$\maxangle=1.25$}

	\end{axis}

\end{tikzpicture}}

		\vspace{-.75em}
		\subfloat[{$\numsteps=30$\label{fig:cdf-radius-n30}}]{\begin{tikzpicture}%
	\begin{axis}[
		betterplot,
		height=.225\textheight,
		trig format plots=rad,
		xlabel={Radius~$\radius_{30}$},
		ylabel={\Gls{cdf} $\cdf_{\radius_{30}}$},
		xmin=20,
		xmax=30,
		ymin=0,
		ymax=1,
		legend style={
			anchor=south,
			at={(.57,.04)},
			font=\small,
		},
		]
		\pgfplotstableread[x=time]{data/results-approx-n-large-a0.500-n30.dat}\ta
		\pgfplotstableread[x=time]{data/results-mc-n-large-a0.500-n30.dat}\tam
		\pgfplotstableread[x=time]{data/results-approx-n-large-a1.250-n30.dat}\tb
		\pgfplotstableread[x=time]{data/results-mc-n-large-a1.250-n30.dat}\tbm

		\addplot+[mark repeat=30] table[x=radius,y=cdfRad] {\ta};
		\addlegendentry{Approx.\,--\,$\maxangle=1.25$}

		\addplot+[mark repeat=3,const plot mark mid] table[x=radius,y=histRad] {\tam};
		\addlegendentry{MC\,--\,$\maxangle=0.5$}

		\addplot+[mark repeat=30] table[x=radius,y=cdfRad] {\tb};
		\addlegendentry{Approx.\,--\,$\maxangle=1.25$}

		\addplot+[mark repeat=3,const plot mark mid] table[x=radius,y=histRad] {\tbm};
		\addlegendentry{MC\,--\,$\maxangle=1.25$}
	\end{axis}
\end{tikzpicture}}
		\caption{%
			Approximation of the distribution of the radius~$\radius_{\numsteps}$ for large~$\numsteps$ from~\eqref{eq:result-cdf-radius-large-n} together with a histogram obtained through \gls{mc} simulations with $10^7$~samples.
			(\autoref{ex:radius-n-large})
		}
		\label{fig:cdf-radius-large-n}
	\end{figure}
\end{example}
\subsection{Approximate Distribution of Angle~\texorpdfstring{$\phaseresult_{\numsteps}$}{θₙ}}

After deriving an approximate distribution of the radius~$\radius_{\numsteps}$ after a large number of steps, we now focus on the resulting angle~$\phaseresult_{\numsteps}$.
Based on the real and imaginary parts, it is given as
\begin{equation}\label{eq:angle-large-n-avg-xy}
	\phaseresult_{\numsteps} = \arctan\frac{\sum_{i=1}^{\numsteps}\Y_i}{\sum_{i=1}^{\numsteps}\X_i} = \arctan\frac{\avg{\Y}}{\strut\avg{\X}}\,, %
\end{equation}
where we use~$\X_i$,~$\Y_i$, and their sample averages defined in~\eqref{eq:def-x-y-components} and~\eqref{eq:def-x-y-sample-averages}, respectively.
According to~\eqref{eq:clt-x-y-components}, the sample averages~${(\avg{\X}, \avg{\Y})}$ are approximately jointly Gaussian distributed for large~$\numsteps$.
Therefore, the resulting angle is given as a transformation of the ratio of Gaussian random variables.
Using the fact that $\avg{\X}>0$, the distribution of this ratio~${\ratioXY=\avg{\Y}/\avg{\X}}$ can be approximated with a normal distribution as~\cite{Hinkley1969}
\begin{equation}\label{eq:approx-dist-ratio-normal}
	\cdf_{\ratioXY}(w) = \cdfnormal\left(\frac{\numsteps\mean_x w}{\sqrt{\numsteps\left(w^2 \var_x+ \var_y\right)}}\right)\,.
\end{equation}
From this, the density of~$\ratioXY$ can then be determined as
\begin{equation}
	\pdf_{\ratioXY}(w) = \pdfnormal\left(\frac{\numsteps\mean_x w}{\sqrt{\numsteps\left(w^2 \var_x+ \var_y\right)}}\right) \frac{\sqrt{\numsteps}\mean_x \var_y}{\left(w \var_x+ \var_y\right)^{\frac{3}{2}}}\,.
\end{equation}
Finally, with the transformation~$\ratioXY=\tan\phaseresult_{\numsteps}$, we obtain the approximate \gls{cdf} and \gls{pdf} of~$\phaseresult_{\numsteps}$ as
\begin{align}
	\cdf_{\phaseresult_{\numsteps}}(\theta) &= \cdf_{\ratioXY}(\tan \theta)\\
	\pdf_{\phaseresult_{\numsteps}}(\theta) &= \frac{\pdf_{\ratioXY}(\tan \theta)}{\cos^2 \theta}
\end{align}
respectively.

\begin{example}[Approximate Distribution of Angle~$\phaseresult_{\numsteps}$]\label{ex:angle-n-large}
	A numerical example of the distribution of the angle for different values of~$\numsteps$ and~$\maxangle$ is shown in \autoref{fig:pdf-angle-large-n}.
	In particular, we compare the approximation for large~$\numsteps$ from~\eqref{eq:result-pdf-phase-large-n} with a \gls{pdf} determined from \gls{mc} simulations with $10^7$~samples.
	While the approximation is not as good for~${\numsteps=5}$ in \autoref{fig:pdf-angle-n5}, it already is very close.
	As expected, this improves further for a larger value of~${\numsteps=30}$ shown in \autoref{fig:pdf-angle-n30}.
	From this example and \autoref{ex:radius-n-large}, it can be concluded that the approximation for large~$\numsteps$ yields a better performance for the angle~$\phaseresult_{\numsteps}$ compared to the radius~$\radius_{\numsteps}$ for smaller values of~$\numsteps$.

	\begin{figure}
		\centering
		\subfloat[{$\numsteps=5$\label{fig:pdf-angle-n5}}]{\begin{tikzpicture}%
	\begin{axis}[
		betterplot,
		height=.23\textheight,
		xlabel={Angle~$\phaseresult_5$},
		ylabel={\Gls{pdf}~$\pdf_{\phaseresult_5}$},
		xmin=-1.25,
		xmax=1.25,
		ymin=0,
		ymax=3.2,
		legend style={
			font=\small,
			align=left,
		},
		]
		\pgfplotstableread{data/results-approx-n-large-a0.500-n5.dat}\ta
		\pgfplotstableread{data/results-mc-n-large-a0.500-n5.dat}\tam

		\pgfplotstableread{data/results-approx-n-large-a1.250-n5.dat}\tb
		\pgfplotstableread{data/results-mc-n-large-a1.250-n5.dat}\tbm

		\addplot+[mark repeat=30] table[x=angle,y=pdfAngle] {\ta};
		\addlegendentry{Approx.\\$\maxangle=0.5$}

		\addplot+[mark repeat=3,const plot mark right] table[x=angle,y=histAngle] {\tam};
		\addlegendentry{MC\\$\maxangle=0.5$}

		\addplot+[mark repeat=30] table[x=angle,y=pdfAngle] {\tb};
		\addlegendentry{Approx.\\$\maxangle=1.25$}

		\addplot+[mark repeat=3,const plot mark right] table[x=angle,y=histAngle] {\tbm};
		\addlegendentry{MC\\$\maxangle=1.25$}
	\end{axis}
\end{tikzpicture}}

		\subfloat[{$\numsteps=30$\label{fig:pdf-angle-n30}}]{\begin{tikzpicture}%
	\begin{axis}[
		betterplot,
		height=.23\textheight,
		xlabel={Angle~$\phaseresult_{30}$},
		ylabel={\Gls{pdf}~$\pdf_{\phaseresult_{30}}$},
		xmin=-1.25,
		xmax=1.25,
		ymin=0,
		ymax=7.5,
		legend style={
			font=\small,
			align=left,
		},
		]
		\pgfplotstableread{data/results-approx-n-large-a0.500-n30.dat}\ta
		\pgfplotstableread{data/results-mc-n-large-a0.500-n30.dat}\tam

		\pgfplotstableread{data/results-approx-n-large-a1.250-n30.dat}\tb
		\pgfplotstableread{data/results-mc-n-large-a1.250-n30.dat}\tbm

		\addplot+[mark repeat=30] table[x=angle,y=pdfAngle] {\ta};
		\addlegendentry{Approx.\\$\maxangle=0.5$}

		\addplot+[mark repeat=3,const plot mark right] table[x=angle,y=histAngle] {\tam};
		\addlegendentry{MC\\$\maxangle=0.5$}

		\addplot+[mark repeat=30] table[x=angle,y=pdfAngle] {\tb};
		\addlegendentry{Approx.\\$\maxangle=1.25$}

		\addplot+[mark repeat=3,const plot mark right] table[x=angle,y=histAngle] {\tbm};
		\addlegendentry{MC\\$\maxangle=1.25$}
	\end{axis}
\end{tikzpicture}}
		\caption{%
			Approximation of the density of the angle~$\phaseresult_{\numsteps}$ for large~$\numsteps$ from~\eqref{eq:result-pdf-phase-large-n} together with a histogram obtained through \gls{mc} simulations with $10^7$~samples.
			(\autoref{ex:angle-n-large})
		}
		\label{fig:pdf-angle-large-n}
	\end{figure}
\end{example}
\subsection{Joint Distribution of Radius and Angle~\texorpdfstring{$(\radius_{\numsteps}, \phaseresult_{\numsteps})$}{(Rₙ, θₙ)}}

The application of the \gls{clt} to the real and imaginary parts shows that they become jointly Gaussian distributed for large~$\numsteps$, cf.~\eqref{eq:clt-x-y-components}.
From this, we use the transformation to/from polar coordinates according to~\eqref{eq:radius-large-n-avg-xy} and~\eqref{eq:angle-large-n-avg-xy} to obtain the joint density of the radius~$\radius_{\numsteps}$ and angle~$\phaseresult_{\numsteps}$ as
\begin{align}
	\pdf_{\radius_{\numsteps}, \phaseresult_{\numsteps}}(r, \theta)
	&= r\pdf_{\numsteps\avg{\X},\numsteps\avg{\Y}}\left(r\cos\theta, r\sin\theta\right)\\
	&= \frac{r}{\numsteps^2}\pdf_{\avg{\X}}\left(\frac{r\cos\theta}{\numsteps}\right) \pdf_{\avg{\Y}}\left(\frac{r\sin\theta}{\numsteps}\right)\\
	&= \frac{r}{\numsteps^2\std_x\std_y}\pdfnormal\left(\frac{r\cos\theta-\numsteps\mean_x}{\numsteps\std_x}\right) \pdfnormal\left(\frac{r\sin\theta}{\numsteps\std_y}\right),
	\label{eq:joint-pdf-large-n-deriv}
\end{align}
where~$\pdfnormal$ represents the \gls{pdf} of the standard normal distribution.

\begin{rem}[{Truncating the Distribution}]
	Since the approximation of the joint distribution in~\eqref{eq:joint-pdf-large-n-deriv} is based on the normal distribution, its support is~$\reals^2$.
	However, from \autoref{sub:n3-support}, we know that the exact support is restricted to a subset, which is bounded by the curves given in \autoref{thm:support-bound-characterization}.
	Therefore, we can improve the approximation by truncating it to the actual support, i.e., setting the \gls{pdf} to zero outside the exact support.
	As a numerical illustration, we compute this truncated probability for the values in \autoref{ex:radius-n-large}.
	For the small number of steps~${\numsteps=5}$, the values are~\num{0.053} and~\num{0.042} for~${\maxangle=0.5}$ and~${\maxangle=1.25}$, respectively.
	In contrast, for~${\numsteps=30}$, the values are~\num{8.7e-07} and~\num{1.9e-07}, respectively.
	As expected, the approximation is significantly better for larger~$\numsteps$, cf.~\autoref{fig:cdf-radius-large-n}.
\end{rem}

\begin{example}[{Approximate Joint Distribution of Radius~$\radius_{\numsteps}$ and Angle~$\phaseresult_{\numsteps}$}]\label{ex:joint-n-large}
	A numerical example of the approximate joint distribution is given in \autoref{fig:pdf-joint-n-large}, where we show the joint \gls{pdf} from~\eqref{eq:joint-pdf-large-n-deriv} for~${\numsteps=30}$ and~${\maxangle=0.5}$.
	Additionally, the exact boundary of the support from~\eqref{eq:result-support-inner} is highlighted.

	First, it can be seen that the probability mass of the approximation is already small outside the actual support. %
	Next, it can be observed that the probability mass is concentrated around a radius of around~${\radius_{30}=28.8}$ and an angle around~${\phaseresult_{30}=0}$.
	This is different compared to the distribution for small~$\numsteps$, e.g., the two-step distribution in \autoref{fig:pdf-joint-n2}, where the probability mass is concentrated around~${\radius_{\numsteps}=\numsteps}$.
	A similar shift in distribution can be observed in the traditional case (${\maxangle=\pi}$), where the probability mass is also more concentrated close to~${\radius_{\numsteps}=\numsteps}$ for small~$\numsteps$, while the distribution of the radius shifts to a Rayleigh distribution for large~$\numsteps$~\cite{Rayleigh1905}.

	\begin{figure}
		\centering
		\begin{tikzpicture}
	\begin{axis}[
		width=.8\linewidth,
		height=.25\textheight,
		trig format plots=rad,
		view={0}{90},
		xmin=-.5,
		xmax=.5,
		colormap name=viridis,
		ylabel={Radius~$\radius_{30}$},
		xlabel={Angle~$\phaseresult_{30}$},
		axis on top,
		]
		\pgfplotstableread{data/results-joint-n-large-a0.500-n30.dat}\tbl

		\addplot3[surf,faceted color=none] table [x=angle,y=radius,z=pdfJoint]{\tbl};

		\addplot[plot1,very thick] table[x=angleInner,y=radInner]{data/results-support-a0.500-n30.dat};

	\end{axis}
\end{tikzpicture}
		\caption{%
			Approximation of the joint density of the radius~$\radius_{\numsteps}$ and $\phaseresult_{\numsteps}$ from~\eqref{eq:result-pdf-joint-large-n} for~${\numsteps=30}$ and~${\maxangle=0.5}$.
			Additionally, the exact boundary of the support from~\eqref{eq:result-support-inner} is shown.
			(\autoref{ex:joint-n-large})
		}
		\label{fig:pdf-joint-n-large}
	\end{figure}
\end{example}

\begin{rem}[Intuition for Increasing~$\numsteps$]\label{rem:intuition-increasing-n}
	The derived results can also be used to verify some intuitions for an increasing number of steps~$\numsteps$.
	First, to get a resulting angle~$\phaseresult_{\numsteps}$ close to the boundary~$\maxangle$, all steps must be taken in this direction.
	However, as the number of steps increases, it becomes less likely that all steps line up in this direction, and the tail probabilities of~$\phaseresult_{\numsteps}$ should decrease with~$\numsteps$, which can be seen in \autoref{fig:pdf-angle-large-n}.
	Next, large resulting radii~$\radius_{\numsteps}$ only occur if all steps line up in the same direction.
	Intuitively, this translates to narrower support for extreme angles~${\phaseresult_{\numsteps}\approx\maxangle}$, since all steps line up in this case.
	This intuition is verified mathematically in \autoref{sub:n3-support} and illustrated in \autoref{fig:pdf-joint-n-large}.
	The concentration effect mentioned in \autoref{ex:joint-n-large}, which is due to the balancing effect of steps in various directions, is also closely related to this.
\end{rem}
\section{Application Example: Reducing Unnecessary Synchronization in Over-the-Air Computation}\label{sec:application-example}

After deriving the general results for the distribution of a two-dimensional random walk with restricted angles, we now further illustrate their application by connecting them to a problem from the field of over-the-air computation.
In particular, consider a system with $\numsteps$~devices that transmit simultaneously.
The fusion center receives the superposition of their individual signals. %
Due to propagation channels, the transmitted signal~${x_i\in\complexes}$ of user~$i$ experiences attenuation~${g_i\in\reals}$ and a phase offset~${\psi_i\in[-\pi, \pi]}$, which are assumed to be constant for many time slots, i.e., slow fading.
Therefore, the received signal~$r$ in time slot~$t$ is given as
\begin{equation*}
	r[t] = \sum_{i=1}^{\numsteps} x_i[t] g_i \exp\left(\imag\psi_i\right).
\end{equation*}
The transmitted signals~$x_i$ are composed of the data symbol~${s_i\in\reals}$, a power factor~${P_i\in\reals}$, and a phase shift~${\phi_i\in[-\pi,\pi]}$, i.e., we have ${x_i[t] = s_i[t] P_i \exp\left(\imag\phi_i\right)}$.
The goal of the fusion center is to estimate the sum of the data symbols, i.e.,~$\sum_{i=1}^{\numsteps}{s_i}$.

It is clear that the optimal solution is for each device to compensate for its channel effects by setting~${P_i=1/g_i}$ and~${\phi_i=-\psi_i}$, which reduces the received signal~$r$ to the sum of the data symbols.
For this, a calibration of all devices is done initially.
However, over time, the oscillators of the devices drift, causing an increase in the phase misalignment between~$\phi_i$ and $\psi_i$, i.e., the received signal becomes
\begin{equation}\label{eq:ota-receive-offset}
	r[t] = \sum_{i=1}^{\numsteps} s_i[t] \exp\left(\imag\phase_i\right),
\end{equation}
where~$\phase_i$ is the (unknown) phase drift of device~$i$ given by~${\phi_i=-\psi_i+\phase_i}$~\cite{Dahl2024ciss}.
While the estimation at the fusion center remains accurate when all~$\phase_i$ are small, the devices should be resynchronized when the phase misalignment becomes too large.
However, resynchronization is costly and should only be done if necessary.
Therefore, we periodically transmit pilot signals to detect when a resynchronization should be initiated.
We use the results derived in this work for this decision, as shown in the following.

By setting the pilot signals to~$s_i=1$ for all devices, we obtain the received signal from~\eqref{eq:ota-receive-offset} in the form of~\eqref{eq:def-random-walk}.
In this case, the random angles~$\phase_i$ correspond to the random phase misalignment of the devices.
The maximum angle~$\maxangle$ now corresponds to an application-specific threshold for tolerated phase misalignment.

If all devices are still within the tolerated misalignment~$\maxangle$, the received signal~$r$ is distributed like~$\Z_\numsteps$ from the previous sections.
Therefore, using the results for the \gls{pdf} from the previous sections, we can calculate the likelihood of the received symbol.
In particular, if we receive a symbol outside the support, which is derived in \autoref{sub:n3-support}, we can be sure that at least one device is outside the threshold.

\begin{example}[Detecting Necessary Resynchronization]
Consider the over-the-air computation system described above with ${\numsteps=30}$~devices.
The tolerated threshold for phase misalignment at each device is~${\maxangle=0.5}$.
Note that these values correspond to the ones in \autoref{ex:joint-n-large}, and an illustration of the \gls{pdf} and the support of the received signal can therefore be found in \autoref{fig:pdf-joint-n-large}.
If the fusion center receives the symbol~${r=28.5\cdot\exp(\imag \cdot 0.1)}$, it is likely that all devices are still within the tolerated misalignment range, so resynchronization is not necessary.
The likelihood of this point is around~\num{0.43} according to~\eqref{eq:result-pdf-joint-large-n}.

In contrast, if the received symbol is~${r=27\cdot\exp(-\imag\cdot 0.3)}$, it lies outside the support of~$\Z_{30}$ according to \autoref{thm:support-bound-characterization}.
Therefore, this symbol cannot be received if all devices are still within the tolerated range.
Hence, the phase offset is too large for at least one of the devices, making resynchronization necessary.
\end{example}

While points outside the support allow for the detection of necessary resynchronization with certainty, in practice, a system designer selects a likelihood threshold below which resynchronization is initiated.
This threshold is based on the tolerated probability of initiating a resynchronization when all devices are within the tolerated misalignment range.
The following example illustrates this.

\begin{example}[Probability of Unnecessary Resynchronization]\label{ex:ota-false-alarm}
	Consider an over-the-air computation system with $\numsteps$~devices that uses the likelihood threshold method to detect necessary resynchronizations, as described above.
	Specifically, a resynchronization is initiated if the {(log\nobreakdash-)likelihood} of the received symbol~$r$ is below a predefined threshold~$\gamma$.
	However, unnecessary resynchronizations can occur if we decide for resynchronization while all devices are still within the tolerated misalignment~$\maxangle$.
	The corresponding probability~$\probfa$ is therefore given by
	\begin{equation}
		\probfa = \Pr\left(\log \pdf_{\radius_{\numsteps}, \phaseresult_{\numsteps}; \maxangle}(r) \leq \gamma\right) \,,
	\end{equation}
	where $\pdf_{\radius_{\numsteps}, \phaseresult_{\numsteps}; \maxangle}$ is the \gls{pdf} derived in the previous sections with maximum angle~$\maxangle$.

	\autoref{fig:results-example-ota-pfa-n2} shows numerical results of~$\probfa$ for ${\numsteps=2}$ and different levels of tolerated misalignment~$\maxangle$ obtained by \gls{mc} simulations with $10^7$~samples.
	Given application-specific tolerances, system designers can use these results to select an appropriate threshold~$\gamma$.
	For example, if the application tolerates a maximum misalignment of ${\maxangle=0.5}$ and a probability of unnecessary resynchronization~${\probfa=0.2}$, the threshold~$\gamma$ should be set to around~${\gamma=2}$.
	For comparison, the same results are shown for a larger number of devices, namely~${\numsteps=30}$, in \autoref{fig:results-example-ota-pfa-n30}.
	As expected, the general trend is the same for both~$N$.
	In both cases, $\probfa$ increases with the threshold~$\gamma$ and also increases with the maximum tolerated misalignment~$\maxangle$.
	However, for large~$\numsteps$, we can notice the aforementioned concentration effect. %
	This leads to steeper curves in \autoref{fig:results-example-ota-pfa-n30} compared to \autoref{fig:results-example-ota-pfa-n2}.

	\begin{figure}
		\centering
		\subfloat[{${\numsteps=2}$\label{fig:results-example-ota-pfa-n2}}]{\begin{tikzpicture}%
	\begin{axis}[
		betterplot,
		height=.23\textheight,
		xlabel={Log-likelihood Threshold~$\gamma$},
		ylabel={Prob. Unnec. Resync.~$\probfa$},
		xmin=-1,
		xmax=10,
		ymin=0,
		ymax=1,
		legend style={
			font=\small,
			align=left,
			anchor=south east,
			at={(.97,.1)},
		},
		]
		\pgfplotstableread{data/results-example-ota.dat}\ta

		\addplot+[mark repeat=20] table[x=gamma,y={a0.10}] {\ta};
		\addlegendentry{$\maxangle=0.1$}

		\addplot+[mark repeat=20] table[x=gamma,y={a0.50}] {\ta};
		\addlegendentry{$\maxangle=0.5$}

		\addplot+[mark repeat=20] table[x=gamma,y={a0.79}] {\ta};
		\addlegendentry{$\maxangle=\pi/4$}
	\end{axis}
\end{tikzpicture}}

		\subfloat[{${\numsteps=30}$\label{fig:results-example-ota-pfa-n30}}]{\begin{tikzpicture}%
	\begin{axis}[
		betterplot,
		height=.23\textheight,
		xlabel={Log-likelihood Threshold~$\gamma$},
		ylabel={Prob. Unnec. Resync.~$\probfa$},
		xmin=-4,
		xmax=5,
		ymin=1e-7,
		ymax=1,
		ymode=log,
		legend style={
			font=\small,
			align=left,
			anchor=south east,
			at={(.8,.1)},
		},
		]
		\pgfplotstableread{data/results-example-ota-N30.dat}\ta

		\addplot+[mark repeat=20] table[x=gamma,y={a0.10}] {\ta};
		\addlegendentry{$\maxangle=0.1$}

		\addplot+[mark repeat=20] table[x=gamma,y={a0.50}] {\ta};
		\addlegendentry{$\maxangle=0.5$}

		\addplot+[mark repeat=20] table[x=gamma,y={a0.79}] {\ta};
		\addlegendentry{$\maxangle=\pi/4$}
	\end{axis}
\end{tikzpicture}}
		\caption{%
			Probability of unnecessary resynchronization for different tolerated misalignment levels~$\maxangle$.
			The results are obtained through \gls{mc} simulations with $10^7$ samples.
			(\autoref{ex:ota-false-alarm})}
		\label{fig:results-example-ota-pfa}
	\end{figure}
\end{example}

\begin{rem}[{Typical Values of~$\numsteps$\label{rem:typical-num-steps}}]
	The advantage of over-the-air computation over consecutive transmissions increases with the number of devices~$\numsteps$, since only a single time slot is needed for data aggregation instead of~$\numsteps$.
	This beneficial effect of maximizing the number of devices has also been observed for specific applications of over-the-air computation, such as federated learning~\cite{Yang2020}.
	The typical values of~$\numsteps$ considered in literature are around~\numrange{20}{40}~\cite{Cai2018,Yang2020} or even larger~\cite{Liu2020}.
	Therefore, in applications in the context of over-the-air computation, the approximation for large~$\numsteps$ from \autoref{sec:large-n} is most practical.
	This is also consistent with the numerical results in \autoref{ex:radius-n-large} and \autoref{ex:angle-n-large}, where the approximation is already very accurate for ${\numsteps=30}$.
\end{rem}

\section{Conclusion}\label{sec:conclusion}
In this work, we have derived the joint and marginal distributions of a two-dimensional random walk in which the angles of each step are uniformly distributed over the subset of the full circle.
The results in this paper can act as a reference for future work encountering this or a closely related problem.
In particular, these results will be beneficial for work on over-the-air computation as the example in \autoref{sec:application-example} has shown.

\section*{Acknowledgments}
\phantomsection
\addcontentsline{toc}{section}{Acknowledgments}
The author would like to thank Martin Dahl for bringing this problem to the author's attention.

\printbibliography[heading=bibintoc]

\end{document}